\newtheorem{theorem}{\textbf{Theorem}}
\newtheorem{lemma}{\textbf{Lemma}}
\newtheorem{corollary}{\textbf{Corollary}}
\newtheorem{proof}{\textbf{Proof}}
\def\ScaleIfNeeded{%
\ifdim\Gin@nat@width>\linewidth \linewidth \else \Gin@nat@width
\fi } \makeatother
\begin{document}
%\pagestyle{fancyplain}
%
%\pagestyle{fancy}
%\lhead[]%
 %   {\footnotesize Physical layer security}
%\cfoot{}

\title{\huge{Spatial-spectral Terahertz Networks}}
\author{Zheng Lin, Lifeng Wang, Bo Tan, and Xiang Li
\thanks{Z. Lin, L. Wang, and X. Li are with the Department of Electrical Engineering, Fudan University, Shanghai, China (E-mail: lifengwang@fudan.edu.cn).}
\thanks{B. Tan is with the Faculty of Information Technology and Communication Sciences, Tampere University, Finland (E-mail: $\rm bo.tan@tuni.fi$).}
}

\maketitle

\begin{abstract}
This paper focuses on the spatial-spectral terahertz (THz) networks, where transmitters equipped with  leaky-wave antennas send information to their receivers at the THz frequency bands. As a directional and nearly planar antenna, the leaky-wave antenna allows for information transmissions with narrow beams and high antenna gains. The conventional large antenna arrays are confronted with challenging issues such as scaling limits and path discovery in the THz frequencies. Therefore, this work exploits the potential of leaky-wave antennas in the dense THz networks, to establish low-complexity THz links. By addressing the propagation angle-frequency coupling effects, the transmission rate is analyzed. The results show that the leaky-wave antenna is efficient for achieving the high-speed  transmission rate. The co-channel interference management is unnecessary when the THz transmitters with large subchannel bandwidths are not extremely dense. A simple subchannel allocation solution is proposed, which enhances the transmission rate compared with the same number of subchannels with the equal allocation of the frequency band. After subchannel allocation, a low-complexity power allocation method is proposed to improve the energy efficiency.
\end{abstract}

\begin{IEEEkeywords}
Terahertz networks, leaky-wave antenna, subchannel allocation, energy efficiency.
\end{IEEEkeywords}

%========================================================================
\section{Introduction}

{The emerging services such as edge computing, immersive communication and tactile internet demand high-speed data transmissions. In an attempt to achieve these services, large millimeter wave (mmWave)  frequency bands have been leveraged in 5G (currently 24-50 GHz \cite{3gpp_104}).} However,  with the increasing numbers of smart devices and autonomous vehicles in the Internet-of-Things, more bandwidths are always required to strengthen the ultra-reliable and low latency communications (URLLC).  Since the terahertz (THz) frequency bands are abundant, THz communication is viewed as a promising 6G technology for reaching unprecedentedly high data rates~\cite{Koenig_THz_2013,ted_6G_2019,Rikkinen_THz_2020}.

In order to counteract the severe path losses in the higher frequencies, large antenna arrays with narrow beams  have been adopted in 5G and its evolution~\cite{sunshu_2014}. In light of sparse-scattering mmWave channel environments, hardware costs and power consumptions etc.,  analog beamforming or hybrid beamforming approaches are recommended in the mmWave systems~\cite{sunshu_2014,Ayach_2014}. The implementations of antenna arrays in the THz communications are investigated in \cite{cenlin_2016,zhi_chen2019,Bile_2019,youli_2020,chonghanSept_2020}, where various beamforming/precoding designs are proposed. However, using conventional antenna arrays has encountered many challenging issues in the THz frequencies, e.g., array architecture for accommodating highly dense THz antennas~\cite{ted_6G_2019}, phase modulation technique for developing THz phase shifters~\cite{tiejuncui_2020,Ge_zhang2017}, and path discovery~\cite{Yasaman_2020}. In addition,  efficient feed network designs~\cite{D_headland2018} and beam squint mitigation~\cite{A_J_Seeds1995} may also be required for developing wideband THz phased arrays. Therefore, it is essential to appropriately choose the THz antennas without adding significant link budgets and power consumptions~\cite{Rikkinen_THz_2020,zhi_chen2019}.

The leaky-wave antenna enables the wave to travel along the guiding architecture, in order to intensify the radio energy in preferred directions~\cite{Oliner_2007,Jackson_2008,Jackson_2012}. As a low-cost and easy-to-manufacture traveling-wave antenna, the leaky-wave antenna can provide frequency-dependent narrow beams with high antenna gains~\cite{Jackson_2012} and frequency-scanning capability~\cite{BAHL_1975,Gezhang2019}. It has been utilized in many areas including  frequency-division multiplexing THz communication~\cite{FDM_2015}, THz radar sensing~\cite{Murano2017,Brown2020,Monnai2020},  and physical layer security enhancement~\cite{Security2020}. {Different from the beam management with conventional large arrays in which an exhaustive search for the best beam is required~\cite{PoleseMag}, the extensive beam training in the THz link discovery with leaky-wave antennas is unnecessary~\cite{Yasaman_2020,Ghasempour_LeakyTrack}.}  One key feature of the leaky-wave antenna is that it enables the information transmissions in a spatial-spectral manner, i.e., frequencies are correlated with the transmission directions. There are also other directional antenna designs such as horn and lens antennas. In particular, horn antenna has been used in the mmWave and THz channel measurements~\cite{TED2015TCOM,ted_6G_2019} and fixed wireless access for THz communication systems~\cite{Koenig_THz_2013}. Unlike leaky-wave antenna, these antennas are inherently fixed beam solution with a single direction, and integrating them for expanding coverage needs to be properly addressed~\cite{Rikkinen_THz_2020}.

The aforementioned works only study the case of point-to-point THz communication with leaky-wave antenna~\cite{FDM_2015,Security2020}. When there exist large numbers of transceivers in the THz networks with massive connections, the co-channel interference has an adverse effect on the transmission rate, which has to be evaluated. Moreover, THz transmissions utilize much larger frequency bandwidths, and  subchannel allocation plays an essential role in controlling the number of subchannels while guaranteeing the targeted performance, to keep the peak-to-average power ratio (PAPR) at a low level. Unfortunately, few research contributions investigate the subchannel allocation when applying the leaky-wave antenna in the THz communications. In addition, energy efficiency enhancement is of importance, to reduce power consumption. Therefore, we adopt the leaky-wave antennas to harness the THz waves in the dense THz networks, and the main contributions are concluded as follows:
\begin{itemize}
\item \textbf{THz Networks with Leaky-wave Antennas:} In the considered THz networks, each transmitter equipped with a leaky-wave antenna (TE$_1$ mode) sends information to its corresponding receiver with an omnidirectional antenna. {Since none-line-of-sight (NLoS) links heavily depend on the reflectors in the THz frequencies~\cite{ted_6G_2019,Khalid_2016,J_Ma_2018} and the interference from NLoS links in dense THz networks can be negligible similar to the dense mmWave networks~\cite{T_Bai_2015}, we focus on the dominant line-of-sight (LoS) links.} As a useful tool to evaluate the performance behavior in large-scale wireless networks~\cite{Haenggi2009}, stochastic geometry is employed to model the spatial distributions of transmitters.
\item \textbf{Average Transmission Rate and Subchannel Allocation:} In light of the spatial energy distribution under the leaky-wave antenna radiation, the average transmission rate is quantified for an arbitrary THz subchannel. Considering the fact that different THz frequencies have different  subchannel bandwidths and undergo distinct channel conditions, a closed-form solution for subchannel allocation is proposed to enhance the transmission rate with the minimum number of subchannels. Then, a low-complexity power allocation is designed to maximize the average energy efficiency (EE) over the number of subchannels.

\item \textbf{Design Insights:} Our results show that high-speed transmission rates are achievable in the dense THz networks with leaky-wave antennas, and noise-limited phenomenon occurs when the transmitters are not super dense. The average transmission rate can vary dramatically with slightly different values of aperture length or attenuation coefficient. The proposed subchannel allocation improves the average transmission rate in the leaky-wave antenna systems, compared with the same number of subchannels with the equal allocation of the frequency band. Our power allocation method can improve the EE. It is demonstrated that center frequencies with the maximum radiated energy may not be the best option when ignoring the channel gains. The slight increase in the attenuation coefficient can dramatically improve the average transmission rate for large frequency bandwidths,  since more frequencies are in the main-lobe that captures large radiated energy.
\end{itemize}

The rest of this paper is organized as follows. The considered system model is described in Section~\ref{System_description},  and
the average transmission rate is analyzed based on stochastic geometry in Section~\ref{stability_section}.
Subchannel allocation is determined in Section \ref{sec:velocity_handover}. Section~\ref{sec:EE} focuses on the EE enhancement.
Section~\ref{sec:simulation} covers the simulation results. Finally, some concluding remarks are presented in Section~\ref{conclusion_section}.

\section{System Descriptions}\label{System_description}
\begin{figure}[t!]
\centering
\includegraphics[width=4.4 cm]{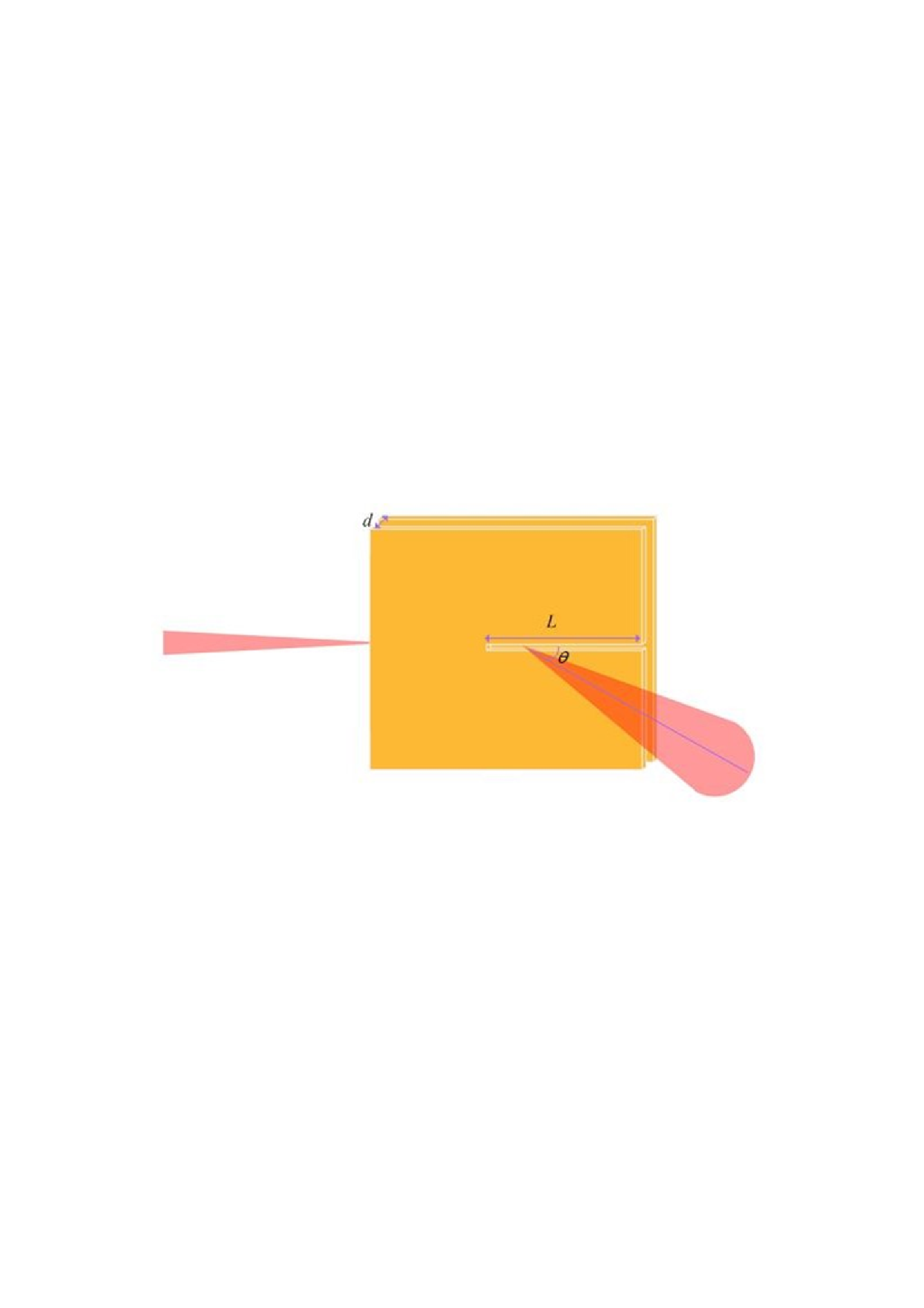}
\caption{An illustration of leaky-wave antenna with the lowest transverse-electric (TE$_1$) mode.
}% \textcolor[rgb]{1.00,0.00,0.00}{edge clouds  at or linked to the SBSs through optical fiber}
\label{leaky_wave}
\end{figure}

\begin{figure}[t!]
\centering
\includegraphics[width=6 cm,height=6 cm]{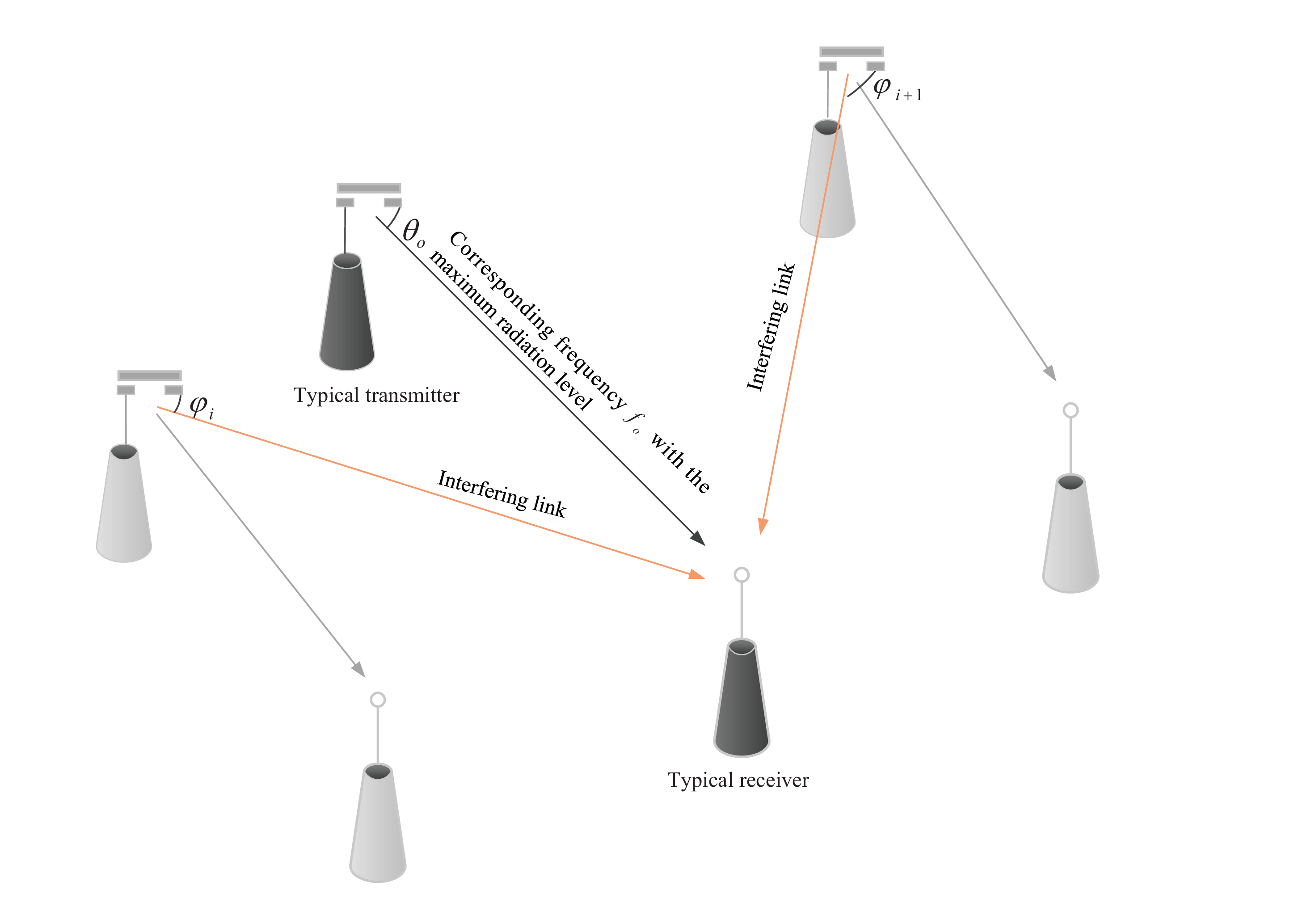}
\caption{An illustration of THz network with leaky-wave antennas, where the interfering antenna gains depend on the propagation angles and the frequencies of the interfering links.
}% \textcolor[rgb]{1.00,0.00,0.00}{edge clouds  at or linked to the SBSs through optical fiber}
\label{network_model}
\end{figure}

As shown in Fig.~\ref{leaky_wave}, the guiding structure of a 1D leaky-wave antenna consists of a rectangular waveguide with a longitudinal slit, where $d$, $L$ and $\theta$ denote the inter-plate distance, aperture length and propagation angle, respectively. In the large-scale THz networks illustrated by Fig.~\ref{network_model}, each transmitter equipped with a leaky-wave antenna communicates with its corresponding receiver equipped with an omnidirectional antenna (namely single-input single-output){\footnote{This work can be easily extended to the case of both transmitters and receivers equipped with leaky-wave antennas where the effective end-to-end antenna gains are considered~\cite{Ghasempour_LeakyTrack}.}}, and they are randomly located following a homogeneous Poisson point process (PPP) $\Phi _{\rm THz}$ with density $\lambda_{\rm THz}$.

Given a THz frequency $f$ and the propagation angle $\theta\left(0 < \theta < 90^{o}\right)${\footnote{{The propagation angle can be estimated by using the method of~\cite{Yasaman_2020} without extensive beam training.}}}, the far-field radiation pattern of the considered leaky-wave antenna is given by~\cite{Sutinjo2008}
\setcounter{equation}{0}\begin{align}\label{antenna_gain}
G\left(f,\theta\right) = L \mathrm{sinc}\left[\left(-j\alpha-k_0 \cos\theta+\beta \right)\frac{L}{2}\right],
\end{align}
where $j=\sqrt{-1}$, $\alpha$ is the attenuation coefficient resulting from the power absorption in the structure, $k_0=2\pi f/c$ with the speed of light $c$ is the wavenumber of the free-space, $\beta = k_0 \sqrt{1-\left(\frac{f_{\rm co}}{f}\right)^2}$  is the phase constant of the TE$_1$ mode based traveling wave, in which $f_{\rm co}=\frac{c}{2d}$ is the cutoff frequency~\cite{Mendis2010}. The available frequencies should be higher than the cutoff frequency (i.e., $f>f_{\rm co}$),  to enable that THz wave propagates away from the antenna structure, namely fast wave radiation~\cite{Sutinjo2008}. Given a LoS direction $\theta$ of a receiver, the maximum level of the radiation can be achieved by using the the following frequency~\cite{Jackson_2012,FDM_2015}:
\begin{align}\label{antenna_gain11}
f^{\max}\left(\theta\right) = \frac{f_{\rm co}}{\sin \theta}.
\end{align}
As shown in Fig. \ref{antenna_pattern},  the frequency for maximizing the radiation is reliant on the beam angle, and lowering attenuation coefficient results in a narrower beam.
\begin{figure}[ht]
     \centering
    \subfigure[$f=80$GHz, $\alpha=30$rad/m]{
         \centering
         \includegraphics[width=4.1 cm]{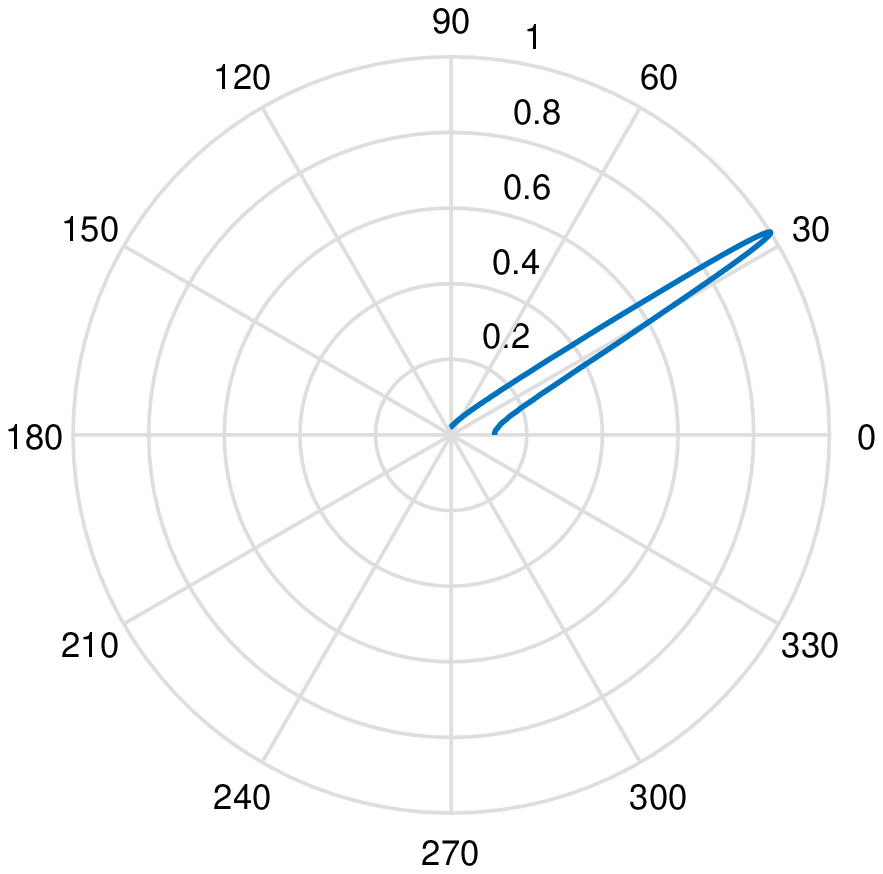}
      \label{fig1a}
     }
     \subfigure[$f=80$GHz, $\alpha=60$rad/m]{
         \centering
         \includegraphics[width=4.1 cm]{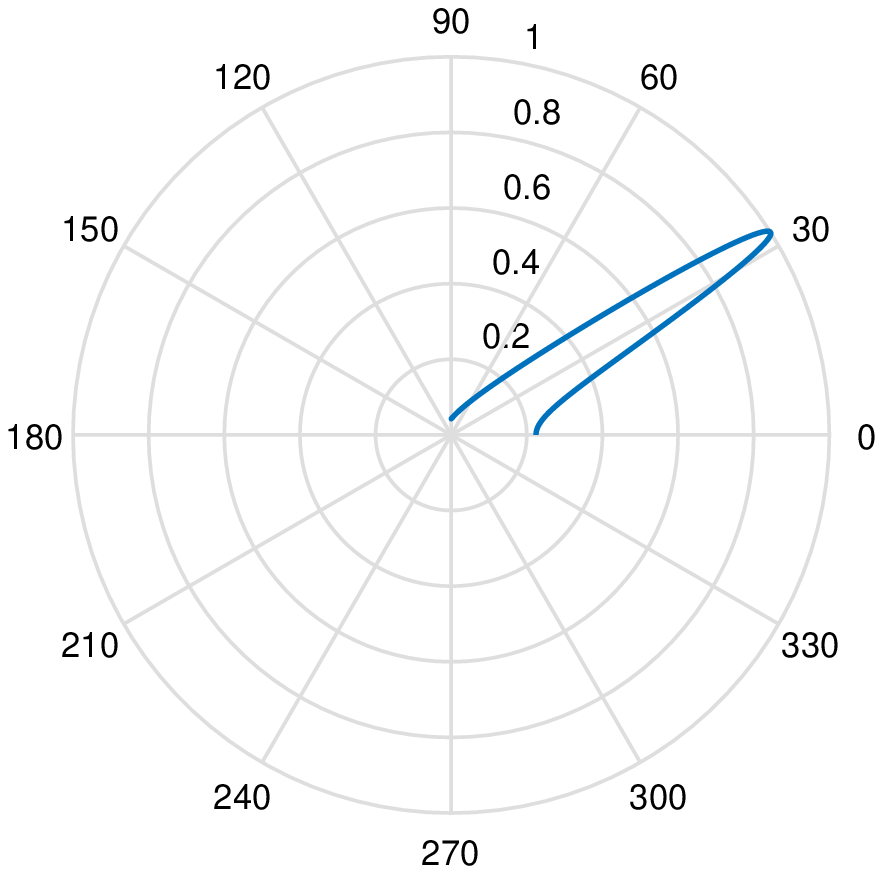}
      \label{fig1b}
    }

    \subfigure[$f=160$GHz, $\alpha=30$rad/m]{
         \centering
         \includegraphics[width=4.1cm]{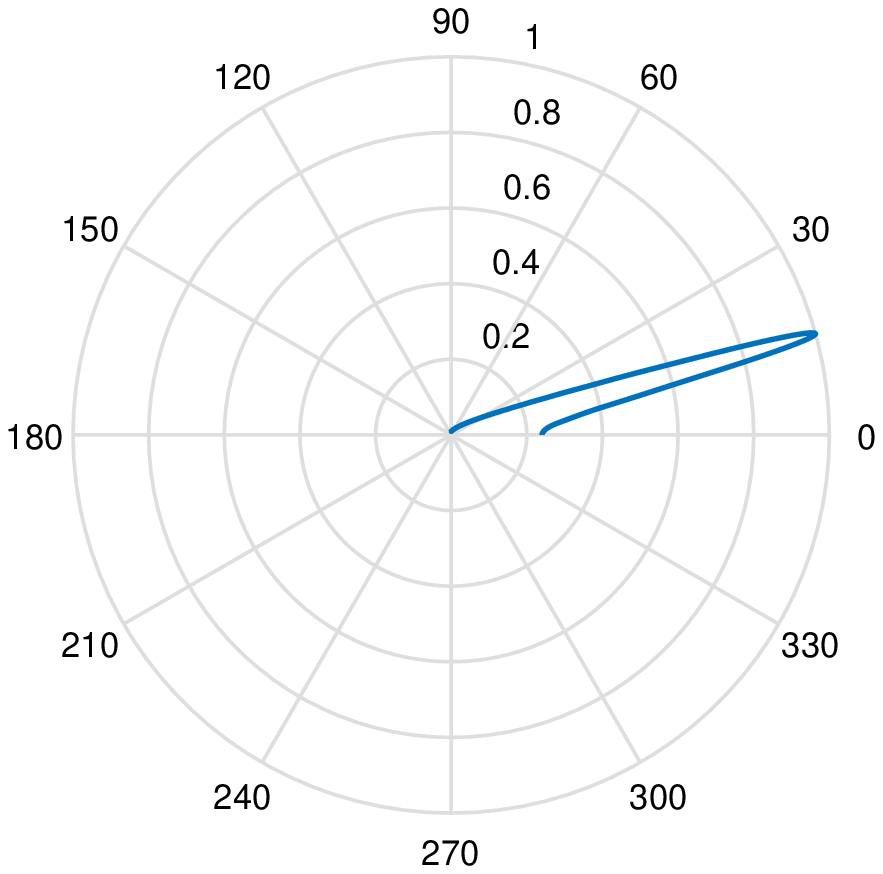}
         \label{fig1c}
     }
    \subfigure[$f=160$GHz, $\alpha=60$rad/m]{
         \centering
         \includegraphics[width=4.1 cm]{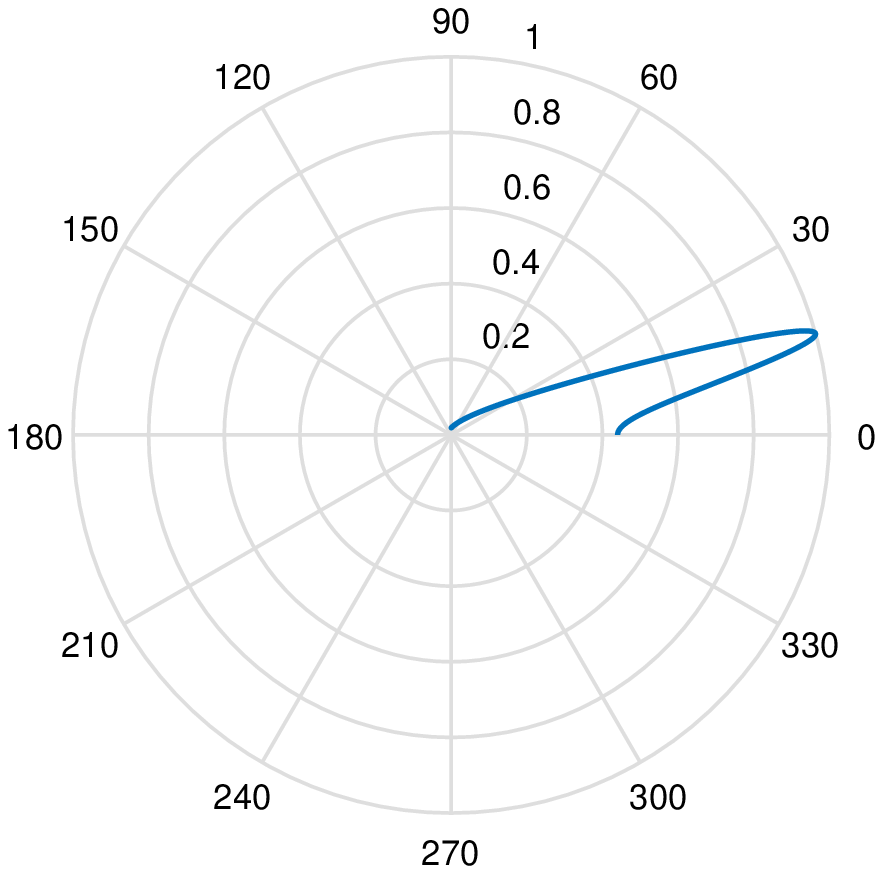}
         \label{fig1d}
     }
  \caption{Examples of the leaky-wave antenna's radiation pattern with $L=5.5$cm and $d=3.5$mm.}
 \label{antenna_pattern}
\end{figure}

Although transmitters may have different propagation angles and send information messages with the maximum antenna gains in the different THz frequencies based on \eqref{antenna_gain11}, co-channel interference still exists in the large-scale THz networks, where the nearby transceivers may use the same subchannels for the low propagation angle difference $\Delta \theta$.
As measured in~\cite{FDM_2015},  for a given $\Delta \theta$, its corresponding frequency bandwidth is
\begin{align}\label{bandwidth_radiation}
B\left(\theta\right)=\frac{f_{\rm co}\Delta \theta}{\sin \theta \tan \theta},
\end{align}
which can be interpreted as the range of the frequencies whose levels of radiation are close to the frequency $f^{\max}\left(\theta\right)$. In addition, different transmitters may have different cutoff frequencies and propagation angles,
however, they may leverage the same frequency band, as indicated in \eqref{antenna_gain11}. Therefore, the transmission rate of a subchannel at a typical receiver (at the origin)  can be expressed as
%\footnote{Although the value of $G\left(f,\theta\right)$ given by \eqref{antenna_gain} is not equal to the effective antenna gain, it can well predict the performance behavior of the leaky-wave antenna, as measured in \cite{Security2020}. }
\begin{align}\label{subchannel_rate}
R_o = B_o \log_2\left(1+
\frac{{q_t \widetilde{G}\left( {f_o,\theta _o } \right)\ell\left( r_o  \right)}}{{\sum\nolimits_{i \in \widetilde{\Phi} _{\rm THz} /o} {q_t \widetilde{G}\left( {f_i,\varphi _i } \right)\ell\left( r_i \right)}  + \sigma _o^2}}
\right),
\end{align}
where $B_o$ is the subchannel bandwidth, $q_t$ is the transmit power spectral density (PSD), $\widetilde{G}\left( {f,\theta } \right) = \xi G\left(f,\theta\right)$ with a constant $\xi$\footnote{Through measuring the effective antenna gain for a particular leaky-wave antenna structure, $\xi$ can be easily obtained and known a priori.} is the effective antenna gain,  $\theta_o$ is the typical propagation angle, $\ell\left( r \right)$ is the path loss function with the communication distance $r$,  $f_o$ and $f_i$ are the frequencies used by the typical transmitter and the $i$-th interferer, respectively, $\varphi _i$ is the interfering link direction from the $i$-th transmitter (interferer) to the typical receiver, which is assumed to be independently and uniformly distributed in $\left(0, \frac{\pi}{2}\right)$, $\sigma _o^2$ is the PSD of the noise. Here, $\ell\left( r \right)=\rho\left(f\right)\left(\max\left(D, r\right)\right)^{-\eta}$ with the intercept $\rho\left(f\right)={(\frac{{\text{c}}}{{4\pi {f}}})^2}$, reference distance $D$ and path loss exponent $\eta$~\cite{TED2015TCOM,andrew_model2017}. { In practice, THz coverage area is usually not much larger than the mmWave (the coverage radius of mmWave is about 200m~\cite{TED2013IEEE_Access,Xianghao_yu2017}), in such a limited THz coverage, the molecular absorption loss can be negligible compared to the high path loss~\cite{Joonas2019,Joonas2020}. Moreover, the THz path loss is more significant than the mmWave and thus THz co-channel interference mainly comes from the typical receiver's neighboring transmitters. Therefore, the effect of molecular absorption loss on the level of interference is also negligible.}  The small-scale fading effect is omitted since it is insignificant in LoS links by using directional antennas at the higher frequencies~\cite{Samimi2013}. It should be noted in \eqref{subchannel_rate} that the frequency $f_i$ solely depends on the link between the $i$-th transmitter and its corresponding receiver.

\section{Average Transmission Rate}\label{stability_section}
In this section, we evaluate the average transmission rate in the dense THz networks with leaky-wave antennas, which is an important performance indicator. Like mmWave,  THz communication is also susceptible to the blockage. There are empirical (e.g., 3GPP) and analytical (e.g., random shape theory) blockage models~\cite{andrew_model2017}, hence we consider the generalized case, i.e., the LoS probability function for a THz link at a distance $r$ is denoted as $P_{\rm LoS}\left(r\right)$. The average transmit rate is calculated as~\cite{Hamdi2008,yongxu2017}
\begin{align}\label{average_rate}
&\overline{R}_o = \mathbb{E}\left[R_o\right] \nonumber\\
&=\frac{B_o}{\ln 2}\int_0^\infty  {\frac{1}{s}} \left( {1 - e^{ - sY\left(\max\left(D, r_o\right)\right)^{-\eta _{\rm LoS}}} } \right) \Theta\left(s\right) e^{ - s\sigma _o^2 } ds,
\end{align}
where $\Theta\left(s\right)=\mathbb{E}\left[ {e^{ - sI} } \right]$,  $ \eta _{\rm LoS}$ is the LoS path loss exponent, $Y =q_t \widetilde{G}\left( {f_o,\theta _o } \right)\rho$, and the interference $I = \sum\nolimits_{i \in \widetilde{\Phi} _{\rm THz} /o} {q_t \widetilde{G}\left( {f_i,\varphi _i } \right)\ell\left( r_i \right)} $. The frequency of the typical transmission link is assumed to be $f_o=f^{\max}\left(\theta_o\right) $, i.e., main-lobe gain can be obtained at the typical receiver. It is obvious that co-channel interference occurs when the frequencies used by the interferers are in the range $\left[f_o-\frac{B_o}{2}, f_o+\frac{B_o}{2}\right]$. {We consider the worst-case scenario that all the transmitters are homogeneous (namely identical cutoff frequency). As such, there are more co-channel interfering links and the typical receiver is more likely to be covered by the main-lobes of the interferers' leaky-wave antennas. } {According to \eqref{bandwidth_radiation}, the probability that a transmitter uses the frequency band $B_o$ at $f_o$  is  given by
\begin{align}\label{int_direction}
P_{f_o} &=\int_0^{\Delta \theta _o } {\frac{2}{\pi }d\theta } \nonumber\\
&=\frac{2 B_o \sin \theta_o \tan \theta_o}{\pi f_{\rm co}}.
\end{align}
\begin{figure*}[!t]
\normalsize
\setcounter{equation}{10}\begin{align}\label{final_expression_rate}
&\bar R_o {\rm{ = }}\frac{{B_o }}{{\ln 2}}\int_0^\infty  {\frac{1}{s}} \left( {1 - e^{ - sY\left( {\max \left( {D,r_o } \right)} \right)^{ - \eta _{{\rm{LoS}}} } } } \right)\exp \left( { - s\sigma _o^2  - 2\pi \lambda _{{\rm{THz}}} P_{f_o } \int_0^\infty  {P_{{\rm{LoS}}} \left( r \right)} \left( {1 - \Xi \left( r \right)} \right)rdr} \right)ds \\
&\mathrm{with}~\Xi\left( r \right)~\mathrm{given~by}~\eqref{LPF_sub2}. \nonumber
\end{align}
\hrulefill
\end{figure*}
Based on the thinning theorem~\cite{Haenggi2009}, the density of the PPP $\widetilde{\Phi}_{\rm THz}$ is $\lambda_{\rm THz} P_{f_o} $. Since the interference caused by NLoS links in the dense network is negligible at the THz frequencies, the LoS interferers can be modeled as the non-homogeneous PPP with the density function $\lambda_{\rm THz} P_{f_o} P_{\rm LoS}\left(r\right)$, which is another important feature for large-scale THz networks with leaky-wave antennas.} Therefore, by using the Laplace functional of the PPP, $\Theta (s)$ can be evaluated as
\setcounter{equation}{6}\begin{align}\label{LPF}
\Theta (s) = \exp \left( - 2\pi \lambda _{\rm{THz}} P_{f_o} \int_0^\infty  {P_{{\rm{LoS}}} \left( r \right)} \left(1 - \Xi\left( r \right)  \right)rdr\right),
\end{align}
where $\Xi\left( r \right)$ is
\begin{align}\label{LPF_sub}
\Xi\left( r \right)=\mathbb{E}\left[e^{ - s q_t \widetilde{G}\left( {f_o,\varphi _i } \right) \rho (\max \{ D,r\} )^{ - \eta _{\rm LoS}} } \right].
\end{align}
To solve \eqref{LPF_sub}, we first need to determine the directions of the interfering links in the spatial-domain, which have a detrimental effect on the transmission rate. It is seen from \eqref{bandwidth_radiation} that the interfering link direction $\varphi _i$ meets the following condition:
\begin{align}\label{LPF_sub_1}
\varphi _i  \in \left[\theta_o-\frac{\Delta \theta_o}{2},\theta_o+\frac{\Delta \theta_o}{2}\right],
\end{align}
where $\Delta \theta_o = \frac{B_o\sin \theta_o \tan \theta_o }{f_{\rm co}}$. Based on \eqref{LPF_sub_1}, $\Xi\left( r \right)$ is explicitly given by
\begin{align}\label{LPF_sub2}
\Xi\left( r \right) =  \int_{\theta _o  - \Delta \theta _o /2}^{\theta _o  + \Delta \theta _o /2} {\frac{2}{\pi }e^{ - sq_t \widetilde{G}\left(f_0 ,\varphi\right)\rho (\max \{ D,r\} )^{ - \eta _{\rm LoS} } } } d\varphi.
\end{align}
Substituting \eqref{LPF} and \eqref{LPF_sub2} into \eqref{average_rate}, we can obtain the average transmission rate given by \eqref{final_expression_rate}.

Considering the fact that $G\left(f_0 ,\varphi\right) \leq G\left( {f_o,\theta_o} \right)$ as $\varphi \in \left[\theta_o-\frac{\Delta \theta_o}{2},\theta_o+\frac{\Delta \theta_o}{2}\right]$, the lower bound of the average transmission rate is
\setcounter{equation}{11}\begin{align}\label{lower_bound}
\hspace{-0.3cm}\overline{R}_o ^{\rm L} = &\frac{B_o}{\ln 2}\int_0^\infty  {\frac{1}{s}} \left( {1 - e^{ - sY\left(\max\left(D, r_o\right)\right)^{-\eta _{\rm LoS}}} } \right)e^{ - s\sigma _o^2 } \nonumber\\
&\quad \quad \;\;\; \times \exp \Bigg[ - 2\pi \lambda _{{\rm{THz}}} P_{f_o } \int_0^\infty  {P_{{\rm{LoS}}} \left( r \right)} \Big(1-\frac{2\Delta \theta_o}{\pi }  \nonumber\\
& \quad \quad \quad  \quad\qquad \;\;\;\ \times e^{ - sY (\max \{ D,r\} )^{ - \eta _{\rm LoS} } } \Big)rdr\Bigg] ds.
\end{align}

The average transmission rate given by \eqref{average_rate} is derived for an arbitrary subchannel. In practice, it is essential to properly determine the number of subchannels and each subchannel bandwidth for a large number of THz bandwidths. In the next section, we provide an efficient subchannel allocation solution.

\section{Subchannel Allocation}\label{sec:velocity_handover}
In the THz systems, the abundance of bandwidths needs to be divided into many subchannels, and the bandwidth of each subchannel depends on the effective antenna gain of the leaky-wave antenna and the channel condition. However, multi-carrier transmissions result in a PAPR issue and large number of subchannels could create higher PAPR~\cite{Seung2005,Guangshi2015}. The use of large mmWave bandwidths with high PAPR waveforms has already posed demanding power amplifier requirements in 5G systems~\cite{Shakib2017,feiwang2019}. Therefore, one of our aims is to determine the minimum number of subchannels for maximizing the transmission rate given the THz bandwidths, which is helpful to tune the number of subchannels for PAPR control. Moreover, the center frequency of each subchannel has to be appropriately selected. The reason is that frequency given by \eqref{antenna_gain11} for obtaining the maximum radiation energy may not be the best option since different THz frequencies have varying THz channel gains with significant path losses~\cite{ted_6G_2019}. In addition, not all available THz bandwidths can be applied under the quality of service (QoS) constraint when using the leaky-wave antenna.

As mentioned above, the considered subchannel allocation problem is formulated as
\begin{align}\label{SA_problem}
&\mathop {\max }\limits_{{\bf{B}},{\bf{f}}} \sum\limits_n {B_n \log _2 \left( {1 + \frac{\gamma _n \left( f_n  \right)} {\sigma_o^2 }} \right)}   \\
&\mathrm{s.t.} ~\mathrm{C1:}~\sum\limits_n {B_n }  \le B_{total} , \nonumber \\
&~\mathrm{C2:}~\bigcap\limits_n {\left\{ {f\left| {f \in \left[ {f_n  - \frac{{B_n }}{2},f_n  + \frac{{B_n }}{2}} \right]} \right.} \right\}}  = \emptyset , \nonumber\\
&~\mathrm{C3:}~\frac{\gamma _n \left( f_n  \right)} {\sigma_o^2 }  \ge \gamma _{\rm th} ,\;\,\;\forall n, \nonumber \\
&~\mathrm{C4:}~\left\| {\gamma _n \left( {f_n  - \frac{{B_n }}{2}} \right)\left| {_{\rm dB} } \right. - \gamma _n \left( {f_n  + \frac{{B_n }}{2}} \right)}\left| {_{\rm dB} } \right. \right\| \le \varepsilon ,\;\,\;\forall n, \nonumber \\
&~\mathrm{C5:}~B_n\geq 0,\;\,\;\forall n, \nonumber
\end{align}
where ${\bf{B}}=\left[B_n\right]$, ${\bf{f}}=\left[f_n\right]$, and $\gamma _n \left( {f_n } \right) = q_t \widetilde{G}\left(f_n ,\theta\right)\ell \left( {r} \right)$ is the receive PSD. Constraint $\mathrm{C1}$ describes the total available bandwidth $B_{total}$; $\mathrm{C2}$ avoids overlap between subchannels; $\mathrm{C3}$ is the QoS constraint with the threshold $\gamma _{\rm th}$; $\mathrm{C4}$ ensures that the received signal power (signal strength) difference is below a small value $\varepsilon$ in the frequencies of a subchannel; $\mathrm{C5}$ ensures that $B_n$ is non-negative value. In problem \eqref{SA_problem},  interference is negligible because of noise-limited THz networks, as confirmed in Section VI.
\begin{figure}[ht]
     \centering
    \subfigure[Radiation pattern, $\alpha=30$rad/m]{
         \centering
         \includegraphics[width=4.1 cm]{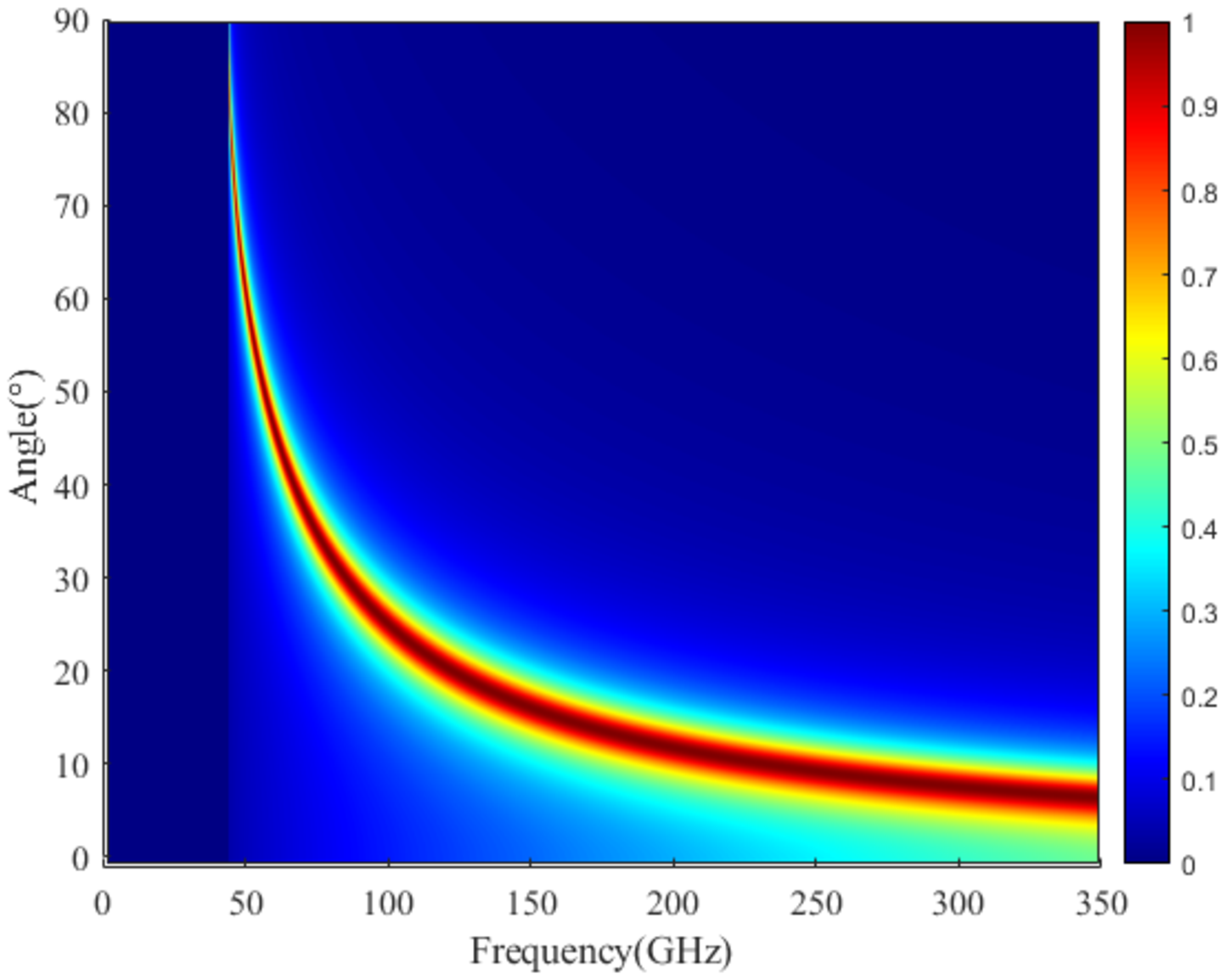}
      \label{fig1a}
     }
     \subfigure[Radiation pattern, $\alpha=60$rad/m]{
         \centering
         \includegraphics[width=4.1 cm]{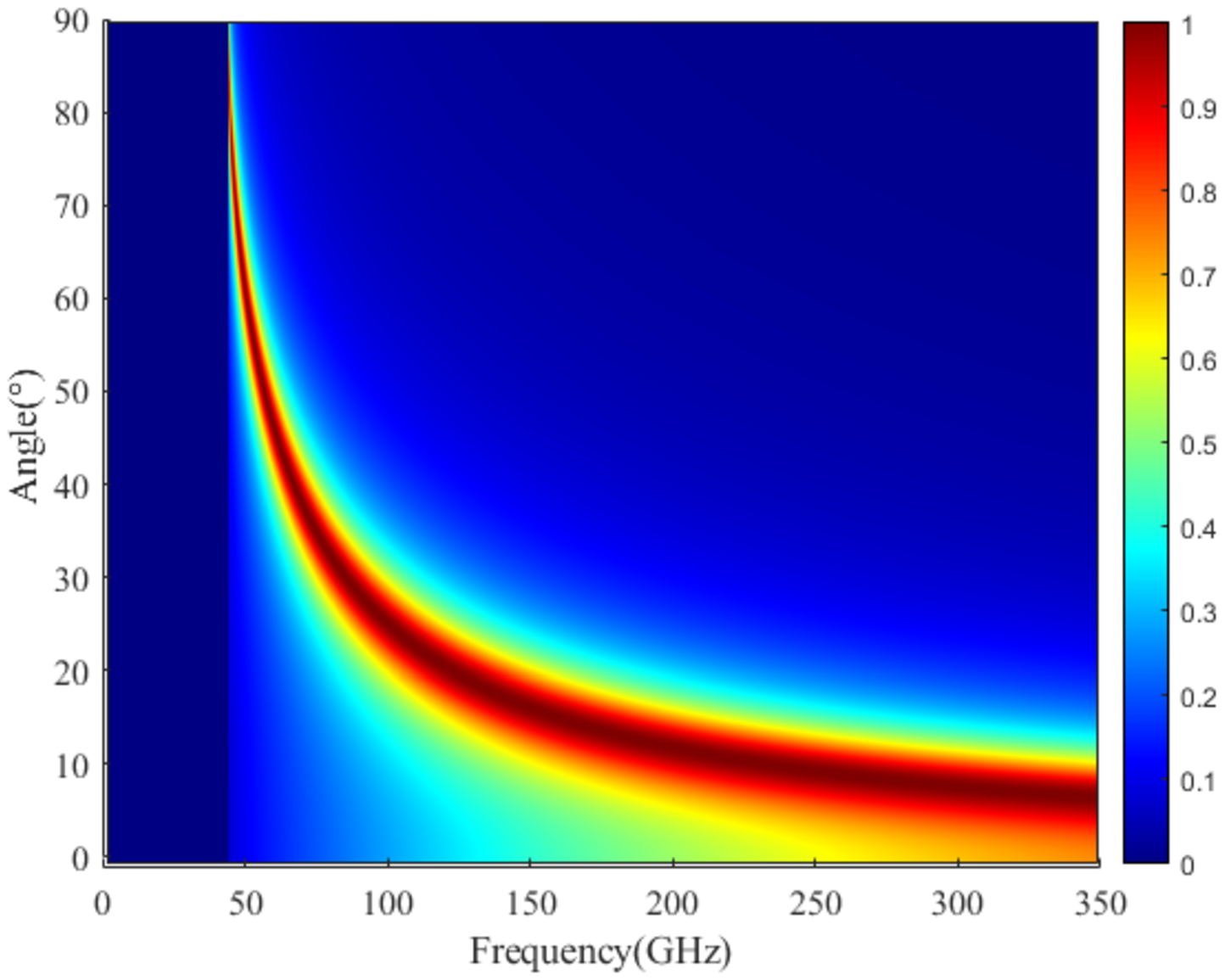}
      \label{fig1b}
    }

    \subfigure[Receive SNR with $\alpha=30$rad/m]{
         \centering
         \includegraphics[width=4.1cm]{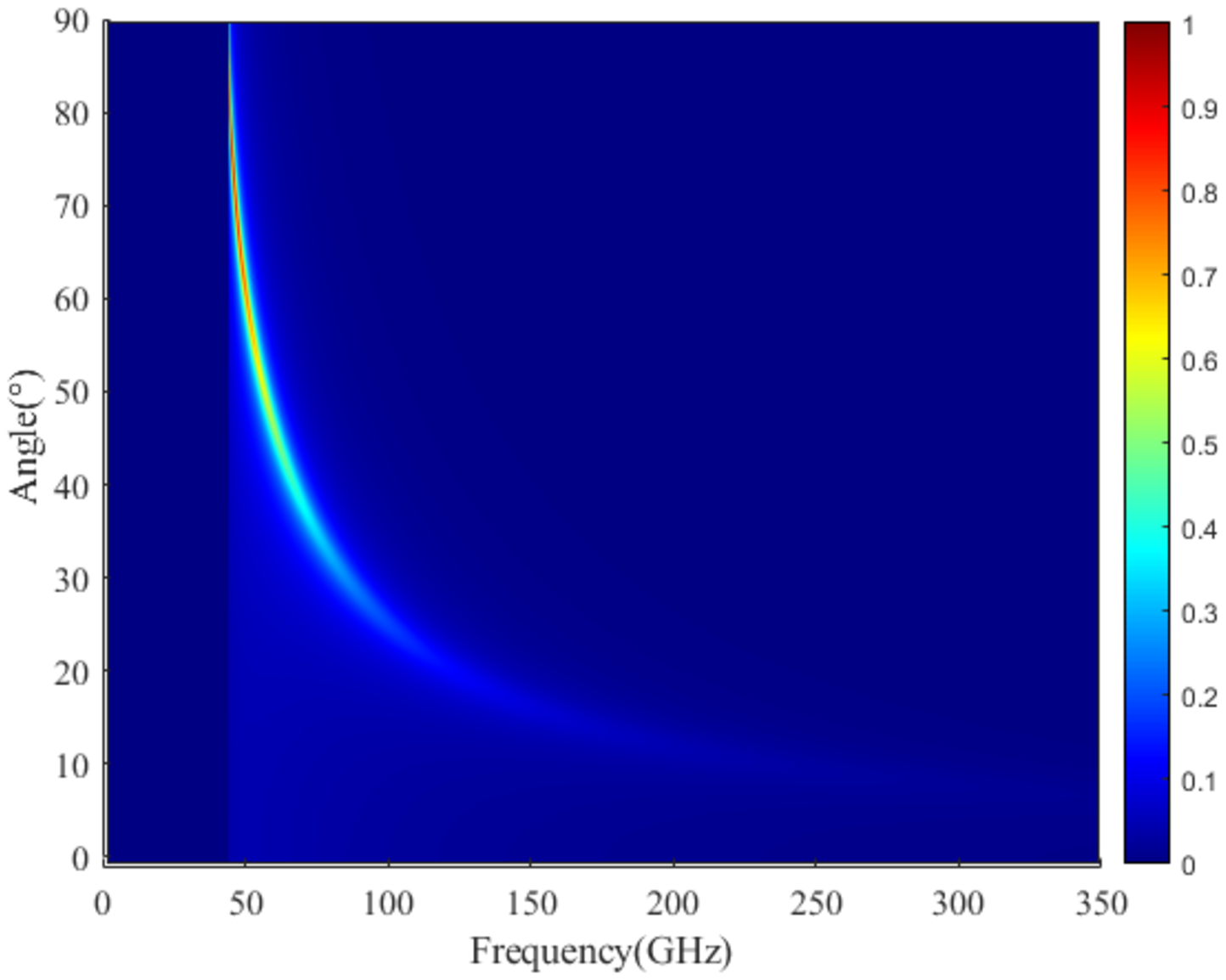}
         \label{fig1c}
     }
    \subfigure[Receive SNR with $\alpha=60$rad/m]{
         \centering
         \includegraphics[width=4.1 cm]{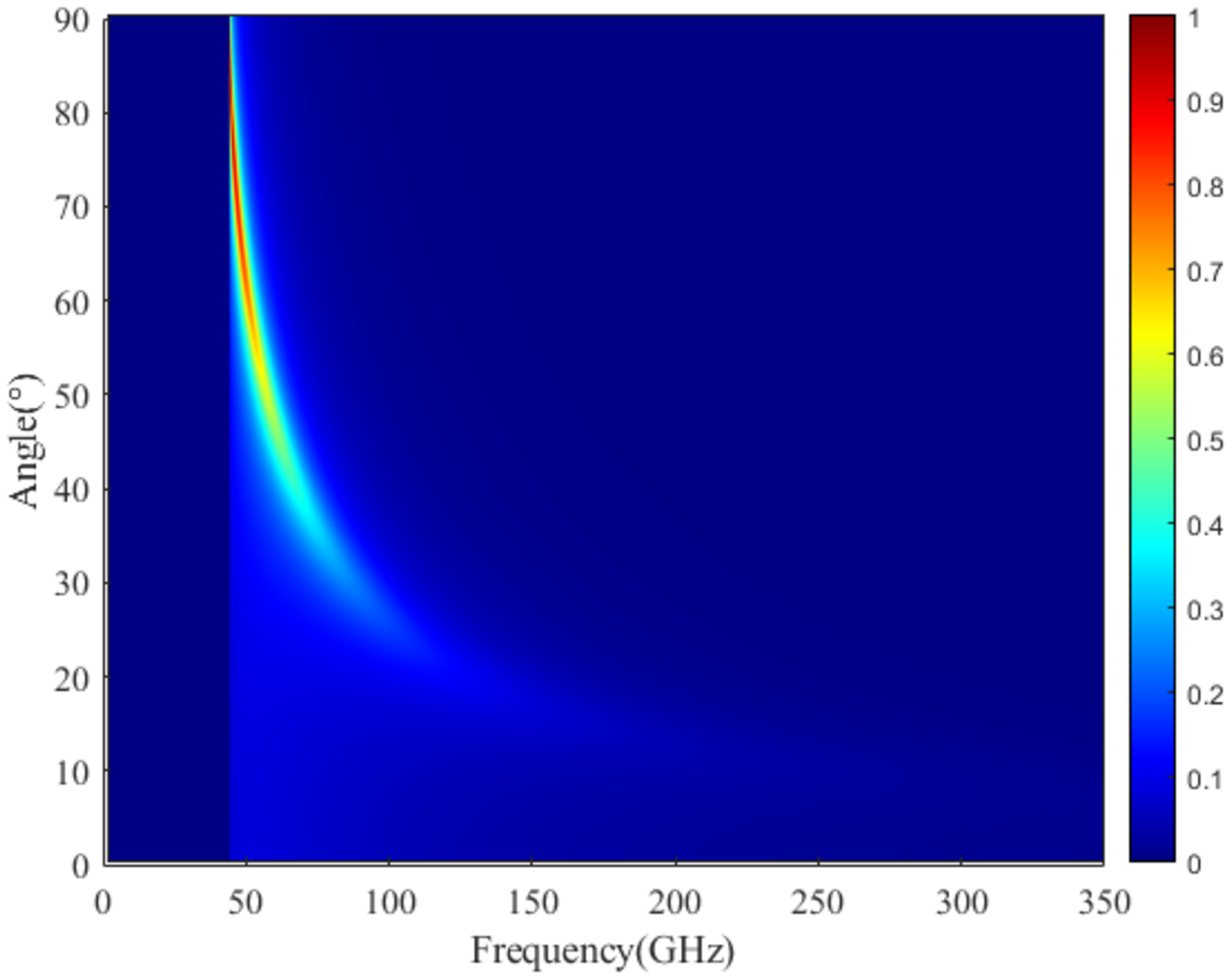}
         \label{fig1d}
     }
  \caption{{Heatmaps to illustrate the leaky-wave antenna's angle-frequency coupling behaviors in terms of the normalized radiation pattern and receive signal-to-noise ratio (SNR)}, where $L=5.5$cm, $d=3.5$mm, $q_t=-71.76$dBm/Hz, $\sigma_o^2=-168$dBm/Hz, the free-space path loss model is adopted with a fixed distance $r_o=50$m~\cite{Joonas2019,Joonas2020}. }
 \label{Angle-freq}
\end{figure}
Since the antenna gain cannot be explicitly calculated, solving the combinatorial problem \eqref{SA_problem} is  challenging. Thus, we provide a low-complexity approach to cope with it. One feature of using leaky-wave antenna is that a large range of THz frequencies (above 0.1 THz) have a nearly identical level of radiation and similar signal strength along a propagation angle, as seen in Fig. \ref{Angle-freq} (Similar result also seen in Fig. 10 of \cite{Yasaman_2020}).

Moreover, the recent THz channel modeling works~\cite{Joonas2019,Joonas2020} have demonstrated that {given the low range communication distance, pure free-space path loss (FSPL) model can well predict the channel of LoS THz link. It is also shown in~Fig. 5 that the molecular absorption loss is relatively marginal for the below 0.35 THz frequency bands, which are the potential 6G frequency bands~\cite{ted_6G_2019}.}
\begin{figure}[ht]
     \centering
         \includegraphics[width=3.0 in,height=2.4 in]{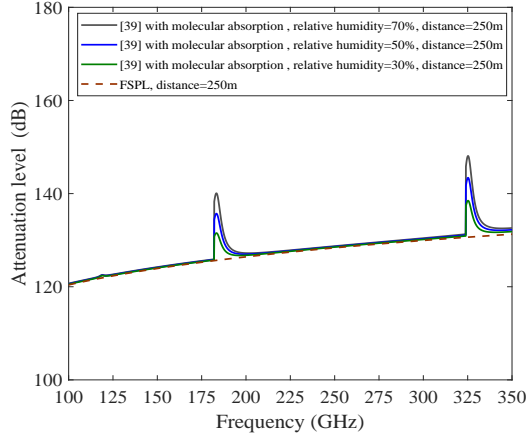}
{  \caption{Attenuation levels for the 100GHz-350GHz channels based on the models of~\cite{Joonas2019} and FSPL, respectively. It is seen that FSPL plays a dominant role and the molecular absorption losses with different humidity values are relatively marginal within 250m distance at below 0.35 THz frequency bands.}}
 \label{Attenuation}
\end{figure}
Therefore, based on FSPL model, constraint $\mathrm{C4}$ in \eqref{SA_problem}  can be rewritten as
\begin{align}\label{constraint_4}
 B_n  \le 2f_n \frac{{10^{\varepsilon /20}  - 1}}{{10^{\varepsilon /20}  + 1}}.
\end{align}
We note that different values of the attenuation coefficient $\alpha$ primarily influence the main-lobe and its nearby side-lobe levels, and have little effect on their trend in the frequency range of interest~\cite{Jackson_2008}, which is also shown in \cite{Sutinjo2008}. Thus we can approximate problem \eqref{SA_problem} by setting $\alpha=0$. In addition,  frequencies of interest need to be close to $f^{\max}\left(\theta\right)$, to achieve large radiated energy. As such, by applying the Taylor series expansion truncated to the first order at $f^{\max}\left(\theta\right)$, $G\left(f,\theta\right)$ with $\alpha=0$ in the frequency range of interest can be evaluated as
\begin{align}\label{G_taylor}
G\left(f,\theta  \right)\approx L-\frac{ L^3}{24}\left(\beta-k_0 \cos\theta\right)^2.
\end{align}

By considering \eqref{constraint_4}, \eqref{G_taylor} and $\beta \approx k_0\left(1-\frac{f_{\rm co}^2}{2f^2}\right) $, problem \eqref{SA_problem} is rewritten as
\begin{align}\label{SA_problem_re}
&\mathop {\max }\limits_{{\bf{B}},{\bf{f}}} \sum\limits_n B_n \log _2 \left( 1 + F_{\rm SNR}\left(f_n\right)\right)   \\
&\mathrm{s.t.} ~\mathrm{C1}, ~~\mathrm{C2},~~\mathrm{C5}, \nonumber\\
&~\mathrm{\widetilde{C}3:}~  F_{\rm SNR}\left(f_n\right) \ge \gamma _{\rm th} ,\;\,\;\forall n, \nonumber \\
&~\mathrm{\widetilde{C}4:}~\eqref{constraint_4},\;\,\;\forall n, \nonumber
\end{align}
where $F_{\rm SNR}\left(f_n\right)=\frac{q_t\xi\ell\left(r\right)} {\sigma_o^2 } \left(L-\frac{ L^3 k_0^2}{24}\left(1-\frac{f_{\rm co}^2}{2f_n^2}- \cos\theta\right)^2\right)$. To deal with the combinatorial constraint $\mathrm{C2}$, we propose a greedy-based solution.  We first find the best $f_1^*$ by solving the following subproblem
\begin{align}\label{SA_problem_re11}
&\mathop {\max }\limits_{f_1}  F_{\rm SNR}\left(f_1\right)  \\
&\mathrm{s.t.} ~\mathrm{\widetilde{C}3}, \nonumber
\end{align}
\noindent which means the best center frequency that has the largest signal strength is chosen first. Based on \eqref{SA_problem_re11}, we have
\begin{theorem}
The optimal solution of the subproblem \eqref{SA_problem_re11} is given by
\begin{align}\label{optimal_CF}
 f_1^*= \mathbf{1}\left(F_{\rm SNR}\left(f_1^{(1)}\right) \ge \gamma _{\rm th} \right) f_1^{(1)} ,
\end{align}
where $f_1^{(1)}=\frac{{f_{{\rm{co}}}^2 }}{{\sqrt {\frac{{12c^2 }}{{L^2 \pi ^2 }} + 2\left( {1 - \cos \theta _o } \right)f_{{\rm{co}}}^2 } }}$, and $\mathbf{1 }(A)$ represents the indicator
function that returns one if the condition $A$ is met.
\end{theorem}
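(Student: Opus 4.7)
The plan is to reduce (17) to an unconstrained single-variable maximization in a transformed coordinate, exploit concavity to obtain a closed-form critical point, and then dispatch the QoS constraint $\widetilde{\mathrm C}3$ by a feasibility check.

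First I would make the $f_1$-dependence of $F_{\rm SNR}(f_1)$ fully explicit. Substituting $k_0 = 2\pi f_1/c$ and $\rho(f_1) = c^2/(16\pi^2 f_1^2)$ into the definition of $F_{\rm SNR}$, the $f_1^2$ coming from $k_0^2$ cancels the $f_1^{-2}$ from $\rho(f_1)$ inside the quadratic term, producing the clean split
\begin{equation*}
F_{\rm SNR}(f_1) \;=\; \frac{C_1}{f_1^{2}} \;-\; C_2\!\left(1-\cos\theta_o - \frac{f_{\rm co}^{2}}{2f_1^{2}}\right)^{\!2},
\end{equation*}
with positive constants $C_1 = \frac{q_t\xi c^{2} L\,(\max(D,r))^{-\eta}}{16\pi^{2}\sigma_o^{2}}$ and $C_2 = \frac{q_t\xi L^{3}\,(\max(D,r))^{-\eta}}{96\,\sigma_o^{2}}$. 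The two terms admit a clear interpretation: the $C_1/f_1^{2}$ piece is the free-space path-loss penalty that favors lower frequencies, while the quadratic piece punishes departures from the angle-matched maximum-radiation frequency.

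Next I would introduce the bijection $y = 1/f_1^{2}$ on the domain $f_1 > 0$. In this coordinate $F_{\rm SNR}$ becomes a linear term minus a convex quadratic in $y$, hence strictly concave, so the unique unconstrained maximizer is obtained from $dF_{\rm SNR}/dy = 0$. This first-order condition yields
$y^\star = \tfrac{2(1-\cos\theta_o)}{f_{\rm co}^{2}} + \tfrac{2C_1}{C_2 f_{\rm co}^{4}}$;
computing the ratio $C_1/C_2 = 6c^{2}/(\pi^{2} L^{2})$ and inverting the substitution gives exactly the expression for $f_1^{(1)}$ stated in the theorem. A quick sign check confirms $y^\star > 0$ for any $0 < \theta_o < \pi/2$, so the recovered $f_1^{(1)}$ is a well-defined positive frequency.

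The final step is to handle $\widetilde{\mathrm C}3$. Since $f_1^{(1)}$ is the global unconstrained maximizer of $F_{\rm SNR}$, a simple dichotomy applies: either $F_{\rm SNR}(f_1^{(1)}) \ge \gamma_{\rm th}$, in which case $f_1^{(1)}$ is feasible and therefore also the constrained optimum; or $F_{\rm SNR}(f_1^{(1)}) < \gamma_{\rm th}$, in which case no frequency can meet the QoS threshold and the subproblem is infeasible, signalled by allocating no subchannel ($f_1^\star = 0$). This dichotomy is precisely what the indicator in \eqref{optimal_CF} encodes. The only mildly delicate step is the algebraic cancellation in the first paragraph that cleanly isolates the two competing $f_1$-dependencies; once the coordinate $y = 1/f_1^{2}$ is identified, concavity makes the rest routine.
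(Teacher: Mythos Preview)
Your argument is correct and reaches the same critical point $f_1^{(1)}$ as the paper, but by a different route. The paper's Appendix~A works directly in the variable $f_1$: it differentiates $\hat F(f_1)=f_1^{-2}\bigl(L-\tfrac{L^3\pi^2 f_1^2}{6c^2}(1-\tfrac{f_{\rm co}^2}{2f_1^2}-\cos\theta)^2\bigr)$, solves $\partial\hat F/\partial f_1=0$ to obtain $f_1^{(1)}$, then separately computes the \emph{second} derivative, locates its zero $f_1^{(2)}$, and uses the ordering $f_1^{(1)}<f_1^{(2)}$ together with a sign analysis of $\partial^2\hat F/\partial f_1^2$ to conclude that $\hat F$ is increasing on $(0,f_1^{(1)})$ and decreasing on $(f_1^{(1)},\infty)$. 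Your substitution $y=1/f_1^{2}$ collapses the objective to a strictly concave quadratic in $y$, so the first-order condition alone delivers the global maximizer and no second-derivative bookkeeping is needed. The paper's approach, while heavier, yields as a by-product the full monotonicity profile of $\hat F$ in the original frequency variable---information that is subsequently reused in Appendix~B to prove Theorem~2; your approach is more economical for Theorem~1 on its own but would need to be supplemented (e.g., by noting that $y\mapsto 1/\sqrt{y}$ is monotone, hence the unimodality transfers back to the $f_1$-axis) before it could feed into the Theorem~2 argument.
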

\begin{proof}
See Appendix A.
\end{proof}
It is shown from \textbf{Theorem 1} that although the frequency given by \eqref{antenna_gain11} has the maximum radiated energy, it may not necessarily achieve the largest signal strength without considering the channel environments. Based on \textbf{Theorem 1} and problem \eqref{SA_problem_re}, the optimal bandwidth of the first subchannel with the center frequency $f_1^*$ is $B_1^*  = 2f_1^* \frac{{10^{\varepsilon /20}  - 1}}{{10^{\varepsilon /20}  + 1}}$. As such, we can obtain the center frequency of the $n$-th subchannel and the corresponding subchannel bandwidth as follows:
\begin{theorem}
The optimal center frequency of the $n$-th ($n \geq 2$) subchannel is given by
\begin{align}\label{optimal_CF_n}
 f_n^*= \mathbf{1}\left(F_{\rm SNR}\left(f_n\right) \ge \gamma _{\rm th} \right) f_n ,
\end{align}
where
\begin{align}\label{f_n}
f_n  = \left\{ \begin{array}{l}
 \Lambda^{-1}f_{\min} ,~~ if\;\hat F\left( \Lambda^{-1} f_{\min} \right) > \hat F\left(  \Lambda f_{\max}\right), \\
\Lambda f_{\max} ,\quad otherwise, \\
 \end{array} \right.
\end{align}
with $\Lambda  = 10^{\varepsilon /20}$, $f_{\min} = \min \left\{ {f_1^* , \cdots ,f_{n - 1}^* } \right\}$, $f_{\max} = \max \left\{ {f_1^* , \cdots ,f_{n - 1}^* } \right\}$,
and  $\hat{F}\left(f\right)= f^{-2}\left(L-\frac{ L^3 \pi^2 f^2}{6 c^2}\left(1-\frac{f_{\rm co}^2}{2f^2}- \cos\theta\right)^2\right)$. The corresponding subchannel bandwidth is
\begin{align}\label{optimal_BW_n}
B_n^*  = \min \left\{2f_n^* \frac{{10^{\varepsilon /20}  - 1}}{{10^{\varepsilon /20}  + 1}},{B_{total}-\sum\limits_{b=1}^{n-1} {B_b } }\right\},
\end{align}
respectively.
\end{theorem}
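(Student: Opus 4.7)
The plan is to argue inductively, extending the greedy logic behind Theorem~1 to the $n$-th subchannel. Assume subchannels $1,\dots,n-1$ have already been allocated by the same procedure, so each bandwidth saturates $\widetilde{\mathrm{C}}4$, giving $B_k^{*}=2 f_k^{*}(\Lambda-1)/(\Lambda+1)$ with $\Lambda=10^{\varepsilon/20}$. Under this choice, subchannel $k$ occupies the interval $[\,2 f_k^{*}/(\Lambda+1),\,2\Lambda f_k^{*}/(\Lambda+1)\,]$, whose endpoints are in fixed ratio $\Lambda$. A short induction then shows that the union of these intervals forms a single contiguous band whose smallest and largest centers are $f_{\min}$ and $f_{\max}$, and which contains the unconstrained SNR-maximizer $f_1^{(1)}$ of Theorem~1 (strictly, since $f_1^{*}=f_1^{(1)}$ sits inside subchannel~1).

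For the $n$-th subchannel I would first show that $\widetilde{\mathrm{C}}4$ must bind: for any fixed $f_n$, replacing $B_n<2 f_n (\Lambda-1)/(\Lambda+1)$ by the upper bound strictly enlarges $B_n\log_2(1+F_{\rm SNR}(f_n))$ without affecting $\widetilde{\mathrm{C}}3$, $\mathrm{C2}$, or $\mathrm{C5}$, and it can shrink only when the residual budget $B_{total}-\sum_{b<n}B_b^{*}$ is smaller, which gives exactly the $\min\{\cdot,\cdot\}$ rule of \eqref{optimal_BW_n}. With $B_n$ pinned to this upper bound, the non-overlap requirement $\mathrm{C2}$ against the contiguous block reduces to $2\Lambda f_n/(\Lambda+1)\le 2 f_{\min}/(\Lambda+1)$ or $2 f_n/(\Lambda+1)\ge 2\Lambda f_{\max}/(\Lambda+1)$, so the feasible centers collapse to $\{f_n\le \Lambda^{-1} f_{\min}\}\cup\{f_n\ge \Lambda f_{\max}\}$.

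Next I would invoke the identity, already used inside Theorem~1, that $F_{\rm SNR}(f)=C\,\hat F(f)$ for the positive constant $C=q_t\xi(c/(4\pi))^2(\max(D,r))^{-\eta}/\sigma_o^2$ (the $f^{-2}$ from $\ell(r)$ is absorbed into the $f^{-2}$ prefactor of $\hat F$), together with the unimodality of $\hat F$ on the working main-lobe interval with unique interior maximizer $f_1^{(1)}$. Since by the inductive hypothesis $f_1^{(1)}$ lies inside the occupied block, $\hat F$ is monotonically increasing on $(0,\Lambda^{-1}f_{\min}]$ and monotonically decreasing on $[\Lambda f_{\max},\infty)$, so the best feasible center on each branch is the branch endpoint closest to $f_1^{(1)}$, and the overall greedy optimum is whichever of $\hat F(\Lambda^{-1}f_{\min})$ and $\hat F(\Lambda f_{\max})$ is larger. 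This is exactly \eqref{f_n}, and the prefactor $\mathbf{1}(F_{\rm SNR}(f_n)\ge \gamma_{\rm th})$ in \eqref{optimal_CF_n} covers the case in which neither endpoint clears $\widetilde{\mathrm{C}}3$, in which case the procedure terminates. Combining with the binding-bandwidth rule yields \eqref{optimal_BW_n}.

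The main obstacle is the structural induction that the occupied block stays contiguous and that $f_1^{(1)}$ remains strictly in its interior at every step, because the greedy rule may extend either to the left or to the right. Contiguity is preserved because each new subchannel is placed flush with the block boundary (saturated bandwidth plus tight non-overlap), and containment of $f_1^{(1)}$ follows from the base case $f_1^{*}=f_1^{(1)}$ together with monotone expansion of the block. A secondary concern is that the truncated Taylor expansion behind $\hat F$ must remain accurate at both candidate endpoints $\Lambda^{-1}f_{\min}$ and $\Lambda f_{\max}$; this is the same main-lobe regime that was already invoked in deriving \eqref{SA_problem_re} and Theorem~1, so the approximation is consistent throughout the induction.
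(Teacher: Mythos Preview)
Your proposal is correct and follows essentially the same route as the paper's Appendix~B: invoke the unimodality of $\hat F$ established in Appendix~A, saturate $\widetilde{\mathrm C}4$, reduce $\mathrm C2$ to the two flush-adjacent candidates $\Lambda^{-1}f_{\min}$ and $\Lambda f_{\max}$, and select whichever gives the larger $\hat F$; your explicit induction that the occupied block remains contiguous and contains the peak $f_1^{(1)}$ is in fact more careful than the paper, which simply asserts the adjacency relation~(B.1). One minor slip to clean up: enlarging $B_n$ for \emph{fixed} $f_n$ can violate $\mathrm C2$ (a wider interval may newly overlap the block), and the paper's greedy step maximizes $F_{\rm SNR}$ (equivalently $\hat F$) rather than the per-subchannel rate $B_n\log_2(1+F_{\rm SNR}(f_n))$---but since you ultimately compare candidates via $\hat F$ and only then impose $\mathrm C2$ with saturated bandwidth, your conclusion and the remainder of the argument match the paper exactly.
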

\begin{proof}
See Appendix B.
\end{proof}

Based on \textbf{Theorem 1} and \textbf{Theorem 2}, we obtain a low-complexity solution of problem \eqref{SA_problem}, which is concluded in \textbf{Algorithm~\ref{algorithmic1}}.
\begin{algorithm}[!htp]
\caption{ Solution of Problem \eqref{SA_problem}}\label{algorithmic1}
\begin{algorithmic}[1]
\STATE Initialize the number of subchannels $n=1$.
\STATE \textbf{if} $n=1$,\textbf{ then} \\
\STATE  ~~~Calculate the center frequency $f_1^{*}$ and bandwidth $B_1^*  = 2f_1^* \frac{{10^{\varepsilon /20}  - 1}}{{10^{\varepsilon /20}  + 1}}$ of the first subchannel based on \textbf{Theorem 1}. \\
\STATE  \textbf{else} \\
\STATE ~~~\textbf{Repeat}
\STATE  ~~~~~~$n= n+1$.\\
\STATE  ~~~~~~Calculate the center frequency $f_n^{*}$ and bandwidth $B_n^*$ of the $n$-th subchannel based on \textbf{Theorem 2}.  \\
\STATE  ~~~\textbf{Until} $B_n^{*}=0$.
\STATE  ~~~$N \leftarrow n-1$.\\
\STATE  \textbf{end if}  \\
\hspace{-0.7cm} \textbf{Output:} Subchannel allocation$\left(f_n^{*},B_n^*\right), n=1,\cdots,N$.
\end{algorithmic}
\end{algorithm}

{ {\textbf{Proposition 1:}} The proposed \textbf{Algorithm~\ref{algorithmic1}} can achieve the minimum number of subchannels for maximizing the transmission rate under QoS constraint.
\begin{proof}
Suppose that there is one subchannel $\widetilde{n}$ with the center frequency $f_{\widetilde{n}}$ and bandwidth $B_{\widetilde{n}}$.  If the SNR $F_{\rm SNR}\left(f_{\widetilde{n}}\right) < F_{\rm SNR}\left(f_{N}^*\right)$ where $f_N^*$ is the center frequency of the $N$-th subchannel obtained from \textbf{Algorithm~\ref{algorithmic1}}, the subchannel $\widetilde{n}$  cannot be added since $N$ subchannels with larger signal strength at center frequencies have already been selected under the total bandwidth constraint; If $F_{\rm SNR}\left(f_{\widetilde{n}}\right) \geq F_{\rm SNR}\left(f_{N}^*\right)$, the subchannel $\widetilde{n}$ should be part of one subchannel determined by \textbf{Algorithm~\ref{algorithmic1}}. The reason is that under the total bandwidth constraint, there are only $N$ subchannels with different $F_{\rm SNR}\left(f_n^*\right)(n=1,\cdots,N)$ values that are greater than $F_{\rm SNR}\left(f_N^*\right)$  and meet $|F_{\rm SNR}\left(f_{n+1}^*\right)\sigma_o^2\left| {_{\rm dB} } \right.-F_{\rm SNR}\left(f_{n}^*\right)\sigma_o^2\left| {_{\rm dB} } \right.| >\varepsilon$, as indicated in \textbf{Theorem 1} and \textbf{Theorem 2}. Hence $F_{\rm SNR}\left(f_{\widetilde{n}}\right)$ value should belong to $[F_{\rm SNR}\left(f_{N}^*\right),F_{\rm SNR}\left(f_{1}^*\right)]$, which completes the proof.
\end{proof} }
{The above subchannel allocation is designed based on the leaky-wave antenna's radiation pattern and channel quality (namely same transmit PSD value for all the subchannels). As such, the proposed \textbf{Algorithm~\ref{algorithmic1}} provides an efficient approach to select the number of subchannels and their bandwidths for very large THz frequency bands under QoS constraint. In the following section, we proceed to improve the energy efficiency after accomplishing the subchannel allocation.}

\section{Energy Efficiency Enhancement}\label{sec:EE}
The prior section has shown that the number of subchannels $N$ and the bandwidth of each subchannel can be easily determined with the help of \textbf{Algorithm~\ref{algorithmic1}}. To reduce the power consumption in the THz systems with leaky-wave antennas, we seek to maximize the average EE with respect to the transmit PSDs $\left\{q_n\right\}$ of the subchannels. Based on Sections II-IV, the considered problem is formulated as
\begin{align}\label{EE_problem}
&\mathop {\max }\limits_{\bf{q}} \frac{1}{N}\sum\limits_{n=1}^N \frac{\log _2 \left( 1 + q_n \Xi_n  \right)}{q_n+q_c}   \\
&\mathrm{s.t.} ~\mathrm{\hat{C}1:}~0< {q_n }  \le q_{\max} ,\;\,\;\forall n,\nonumber \\
&~~\quad\mathrm{\hat{C}2:}~q_n \Xi_n \ge \gamma _{\rm th} ,\;\,\;\forall n, \nonumber
\end{align}
where ${\bf{q}}=\left[q_n\right]$, $\Xi_n=\frac{\widetilde{G}\left(f_n ,\theta\right)\ell \left( {r} \right)} {\sigma_o^2 }$, and $q_c$ is the PSD due to the hardware's power consumptions. Constraint $\mathrm{\hat{C}1}$ is the transmit PSD's feasible range with the maximum value $q_{\max}$, and $\mathrm{\hat{C}2}$ is the QoS constraint.  Problem \eqref{EE_problem} can be decomposed into $N$ subproblems:
\begin{align}\label{EE_subproblem}
&\mathop {\max }\limits_{q_n}  \frac{\log _2 \left( 1 + q_n \Xi_n \right)}{q_n+q_c}   \\
&\mathrm{s.t.} ~\mathrm{\hat{C}1},~~\mathrm{\hat{C}2}. \nonumber
\end{align}
Then, we have the following theorem:
\begin{theorem}\label{theorem_3}
The optimal transmit PSD of the $n$-th subchannel is given by
\begin{align}\label{optimal_CF_n}
 q_n^*=\left\{ \begin{array}{l}
 q_{\max } ,\quad if\;\;\widehat{F}_{{\rm{EE}}} \left( {q_{\max } } \right) \ge 0, \\
 \max \left\{ {q_o ,\frac{{\gamma _{\rm th} }}{{\Xi _n }}} \right\},\quad otherwise, \\
 \end{array} \right.
\end{align}
where $\widehat{F}_{\rm{EE}}\left(q\right) = \frac{q  + q_c }{1 + q \Xi _n }\Xi _n  - \ln \left(1 + q \Xi _n  \right)$, and $q_o$ with $\widehat{F}_{\rm{EE}}\left(q_o\right)=0$ can be easily obtained by using a one-dimension search in $q_o \in \left(0,q_{\max }\right]$  since $\widehat{F}_{\rm{EE}}\left(q\right)$ is a decreasing function.
\end{theorem}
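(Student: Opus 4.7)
The plan is to treat each decoupled subproblem \eqref{EE_subproblem} as a one-dimensional optimization and characterize its optimizer by a standard first-order analysis. First I would differentiate the objective $f(q_n)=\log_2(1+q_n\Xi_n)/(q_n+q_c)$ with respect to $q_n$ and simplify; the quotient rule yields
\begin{align*}
f'(q_n)=\frac{1}{(q_n+q_c)^2\ln 2}\left(\frac{\Xi_n(q_n+q_c)}{1+q_n\Xi_n}-\ln(1+q_n\Xi_n)\right),
\end{align*}
so that the sign of $f'(q_n)$ coincides with the sign of $\widehat{F}_{\rm EE}(q_n)$. Thus the unconstrained stationary points of the EE are precisely the roots of $\widehat{F}_{\rm EE}$.

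Next I would establish that $\widehat{F}_{\rm EE}$ is strictly decreasing on $(0,\infty)$, as claimed in the statement. A direct computation of $\widehat{F}_{\rm EE}'(q)$ should collapse to
\begin{align*}
\widehat{F}_{\rm EE}'(q)=-\frac{\Xi_n^2(q+q_c)}{(1+q\Xi_n)^2}<0,
\end{align*}
which is the key monotonicity fact. Combined with the boundary values $\widehat{F}_{\rm EE}(0)=q_c\Xi_n\ge 0$ and $\widehat{F}_{\rm EE}(q)\to -\infty$ as $q\to\infty$, this guarantees a unique root $q_o$ and justifies finding it by a one-dimensional search, as stated.

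From here I would split into the two cases of the theorem. If $\widehat{F}_{\rm EE}(q_{\max})\ge 0$, monotonicity forces $\widehat{F}_{\rm EE}\ge 0$ on $(0,q_{\max}]$, so $f$ is non-decreasing on the entire feasible interval and the optimum is $q_n^*=q_{\max}$, provided the QoS constraint $\mathrm{\hat C2}$ is satisfied (which it must be, else the problem is infeasible). Otherwise $q_o\in(0,q_{\max})$, and $f$ strictly increases on $(0,q_o]$ and strictly decreases on $[q_o,q_{\max}]$, so the unconstrained maximizer on $(0,q_{\max}]$ is $q_o$. To incorporate $\mathrm{\hat C2}$, I would intersect the feasible set $[\gamma_{\rm th}/\Xi_n,q_{\max}]$ with the unimodal structure: if $\gamma_{\rm th}/\Xi_n\le q_o$ the peak $q_o$ is feasible and optimal, while if $\gamma_{\rm th}/\Xi_n>q_o$ the objective is decreasing throughout the feasible set and the optimum is its left endpoint $\gamma_{\rm th}/\Xi_n$. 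Both sub-cases are captured by $q_n^*=\max\{q_o,\gamma_{\rm th}/\Xi_n\}$, matching the statement.

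The main obstacle I anticipate is simply keeping the algebra of $\widehat{F}_{\rm EE}'(q)$ clean enough to see the negative-sign cancellation; once monotonicity is in hand, everything else is a textbook argument about unimodal functions on an interval with a lower QoS cut-off, and the boundary case $\widehat{F}_{\rm EE}(q_{\max})=0$ (where both branches of the theorem agree) needs only a brief remark.
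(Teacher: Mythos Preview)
Your proposal is correct and follows essentially the same approach as the paper's proof: compute $f'(q_n)$ via the quotient rule to identify $\widehat{F}_{\rm EE}$ as the sign-determining factor, differentiate $\widehat{F}_{\rm EE}$ to establish strict monotonicity, and then split into the two cases based on the sign of $\widehat{F}_{\rm EE}(q_{\max})$, using unimodality to handle the QoS lower bound. Your treatment is slightly more detailed (you explicitly note the boundary values $\widehat{F}_{\rm EE}(0)=q_c\Xi_n\ge 0$ and the limit at infinity, and you remark on the borderline case $\widehat{F}_{\rm EE}(q_{\max})=0$), but the structure and key computations are identical to the paper's Appendix~C.
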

\begin{proof}
See Appendix C.
\end{proof}
Thus, we obtain a closed-form power allocation solution for enhancing the EE.

\section{Simulation Results}\label{sec:simulation}
This section provides numerical results to validate our analysis and the efficiency of the proposed solution for subchannel allocation and EE enhancement. In the simulations, {the transmit PSD is $q_t=-71.76$dBm/Hz (namely the total transmit power is 1W in a 15GHz bandwidth)\footnote{ Higher transmit power can be achieved, for instance, the existing work~\cite{Taiyun_Chi2017} has demonstrated that a CMOS-based source signal power can be about $-40$dBm/Hz. In addition, it is shown in~\cite{D_headland2018} that leaky-wave antenna is better than the conventional phased array from the perspective of radiation efficiency.}}, the noise's PSD is $\sigma_o^2=-168$dBm/Hz{\footnote{ Note that some existing THz testbeds such as~\cite{Priyangshu2020} have shown the level of noise's PSD in certain THz frequency bands.}}, the inter-plate distance is $d=3.5$mm, $\xi=1$, the free-space path loss model is applied with the LoS path loss exponent $\eta _{\rm LoS}=2$~\cite{Joonas2019,Joonas2020}, and the reference distance $D=1$. Based on the 3GPP blockage model~\cite{3gpp_blockage}, the LoS probability function  $P_{\rm LoS}\left(r\right)$ with a distance $r$ is given by
\begin{align}\label{3gpp_block}
P_{\rm LoS}\left(r\right)=e^{ - r/a_1 }  + \left( 1 - e^{ - r/a_1 }  \right)\min \left(\frac{a_2 }{r},1 \right),
\end{align}
where $a_1=63$m and $a_2=18$m. The other simulation parameters are detailed
in the following simulation results. { In addition,  the Monte Carlo simulation results are obtained by averaging over $3\times10^4$ trials.}

\subsection{Average Transmission Rate Analysis}
In this subsection, we analyze the effects of different system parameters on the average transmission rate. The analytical results for average transmission rate are obtained from \eqref{final_expression_rate}.

\begin{figure}[htbp]
\centering
\includegraphics[width=2.9 in,height=2.4 in]{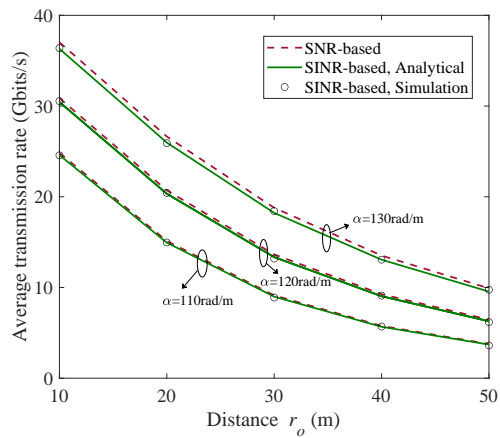}
\caption{The average transmission rate versus communication distance $r_o$ for different $\alpha$ with $\lambda_{\rm THz}=5*10^{-1}$/m$^2$, $L=0.06$m, $B_o=5$GHz and $f_o=270$GHz (namely the LoS direction of the typical receiver is $\theta_o=28.7^o$ based on \eqref{antenna_gain11}).}
\label{fig_distance}
\end{figure}

Fig.~\ref{fig_distance} shows that our analytical results  have a good match with the Monte Carlo simulations for different communication distance and attenuation coefficient values.  The use of leaky-wave antenna enables the THz networks to be noise-limited in the presence of highly dense transmitters. In the THz frequencies,  slightly increasing the communication distance between a typical transmitter and its receiver can significantly reduce the average transmission rate, due to the higher path loss. In addition, different attenuation coefficients has a negligible effect on the level of interference.

\begin{figure}[htbp]
\centering
\includegraphics[width=2.9 in,height=2.4 in]{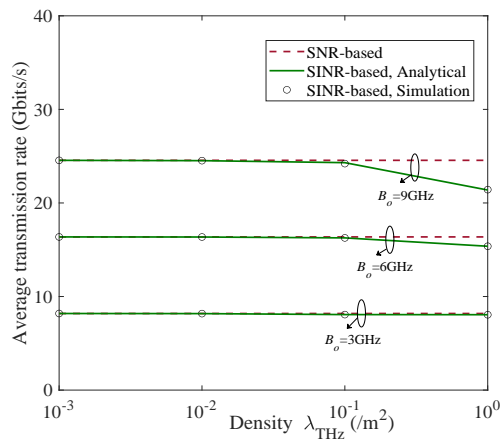}
\caption{The average transmission rate versus density $\lambda _{\rm{THz}}$ for different subchannel bandwidth $B_o$ with $r_o=30$m, $L=0.06$m, $\alpha=120$rad/m and $f_o=270$GHz(namely $\theta_o=28.7^o$).}
\label{fig_density}
\end{figure}

Fig.~\ref{fig_density} shows that THz networks with leaky-wave antennas become interference-limited only when extremely dense transmitters (e.g., $\lambda_{\rm THz}=10^{0}$/m$^2$ in this figure) utilize  large subchannel bandwidth. Moreover, we see that increasing the subchannel bandwidth results in a higher level of interference  as $\lambda_{\rm THz} > 10^{-1}$/m$^2$. The reason is that large subchannel bandwidth corresponds to higher propagation angle difference $\Delta \theta_o$ (See \eqref{bandwidth_radiation}), the effects of which are two-fold: 1) More interferers use the same frequency band; 2) The typical receiver is covered by more interferers.

{ Fig.~\ref{fig_fre} shows that the average transmission rate significantly decreases in higher center frequencies, due to the fact that spectral efficiency (bits/s/Hz) decreases in  higher frequencies with higher path losses under the same radiation pattern.} Again, we see that the analytical results match with the Monte Carlo simulations for different values of THz center frequency and aperture length. For higher THz frequencies (e.g., above 300GHz in this figure), the effect of aperture length on the interference is marginal.

\begin{figure}[htbp]
\centering
\includegraphics[width=2.9 in,height=2.4 in]{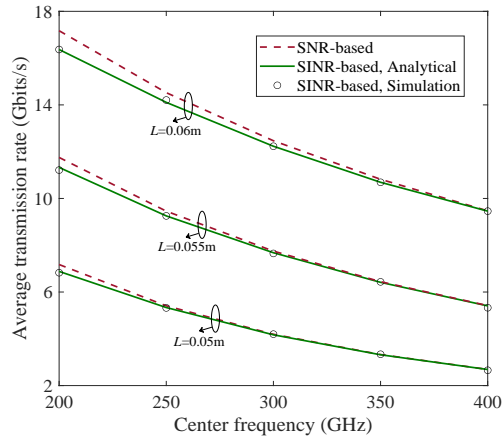}
\caption{The average transmission rate versus center frequency for different  $L$ with $r_o=30$m, $\lambda_{\rm THz}=5*10^{-1}$/m$^2$, $\alpha=120$rad/m and $B_o=5$GHz.}
\label{fig_fre}
\end{figure}

\subsection{Subchannel Allocation}\label{SA_optimal_sim}

 In this subsection, we focus on the efficiency of the proposed subchannel allocation solution through comparison with the same number of subchannels with the equal allocation of frequency band. In the simulations, the propagation angle from the transmitter to the receiver is uniformly distributed, i.e., $\theta \in U\left(0,\frac{\pi}{2}\right)$, the communication distance is uniformly distributed within the coverage radius $r_{\max}$,  the aperture length $L=0.06$m, $\gamma _{\rm th}=-6.5$dB~\cite{Sundeep2019}, $\varepsilon=0.2$dB, the continuous broadband spectrum ranging from 100 GHz to 350 GHz is considered, namely $f \in [100, 350]$(GHz). The proposed solution is provided based on \textbf{Algorithm}~\ref{algorithmic1}, which is in comparison with the equal allocation method (namely equal allocation of frequency band with the same number of subchannels and center frequency given by \eqref{antenna_gain11}).

\begin{figure}[htbp]
     \centering
    \subfigure[Average transmission rate]{
         \centering
         \includegraphics[width=2.9 in,height=2.4 in]{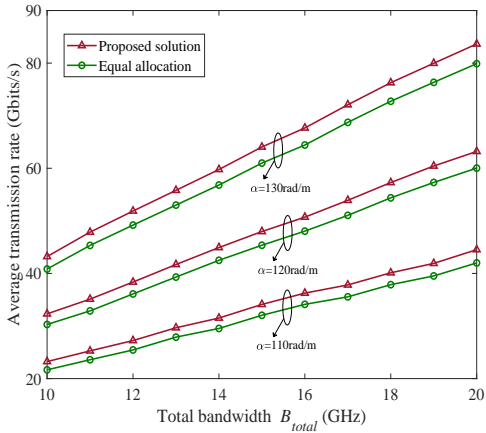}
      \label{fig1a}
     }
     \subfigure[Average number of subchannels]{
         \centering
         \includegraphics[width=2.9 in,height=2.4 in]{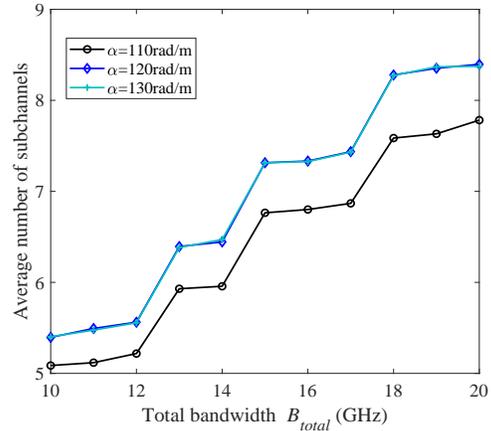}
      \label{fig1b}
    }
\caption{The average transmission rate and number of subchannels versus total bandwidth for different $\alpha$ with the coverage radius $r_{\max}=100$m.}
\label{fig_band}
\end{figure}

%\begin{figure}[htbp]
%     \centering
%    \subfigure[Average transmission rate]{
%         \centering
%         \includegraphics[width=2.5 in,height=2.1 in]{different_alpha.eps}
%      \label{fig3a}
%     }
%     \subfigure[Average number of subchannels]{
%         \centering
%         \includegraphics[width=2.5 in,height=2.1 in]{Numsub_different_alpha.eps}
%      \label{fig3b}
%    }
%%
%\caption{The average transmission rate and number of subchannels versus attenuation coefficient for different $B_{total}$ with the coverage radius $r_{\max}=100$m.}
%\label{fig_alpha}
%\end{figure}

\begin{figure}[htbp]
     \centering
    \subfigure[Average transmission rate]{
         \centering
         \includegraphics[width=2.9 in,height=2.4 in]{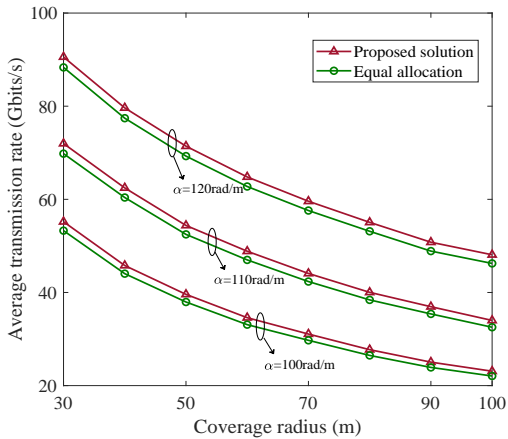}
      \label{fig2a}
     }
     \subfigure[Average number of subchannels]{
         \centering
         \includegraphics[width=2.9 in,height=2.4 in]{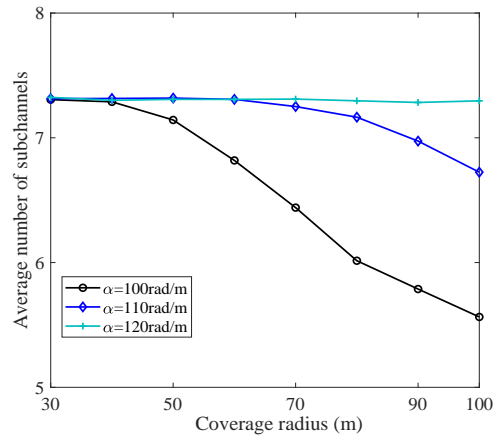}
      \label{fig2b}
    }
\caption{The average transmission rate and number of subchannels versus coverage radius for different $\alpha$ with the total bandwidth $B_{total}=15$GHz.}
\label{fig_cover}
\end{figure}

Fig.~\ref{fig_band} shows that the proposed solution outperforms the equal allocation method for different total bandwidths. Slightly increasing the $\alpha$ value leads to the higher average transmission rate when adding the frequency bandwidths. The reason is that based on \cite[eq. (7.25)]{Jackson_2008}, increasing the $\alpha$ value creates larger beamwidth, which means more frequencies are in the main-lobe that captures large radiated energy. It is seen from Fig. \ref{fig1b} that the proposed solution can well control the number of subchannels, e.g.,  in this figure, the number of subchannels increases by about $60\%$ when doubling the total bandwidth. There exist more subchannels for larger $\alpha$  and total bandwidth, due to the fact that more frequencies in the main-lobe satisfy the QoS constraint and can be applied.

Fig.~\ref{fig_cover} shows that the proposed solution achieves better performance than the equal allocation method for different coverage areas.  When the coverage area of the transmitter is expanded, the performance difference between the far-away receiver and the nearby one is significant, due to the higher path losses in the THz frequencies. An interesting phenomenon is seen in Fig.~\ref{fig2b}, i.e., the average number of subchannels decreases for larger coverage radius and lower attenuation coefficient values. The reason is that lower attenuation coefficient value creates narrower beamwidth, thus more frequencies are in the side-lobes (lower signal strength), which means that more frequencies cannot meet the QoS constraint.

%
%
%Fig.~\ref{fig_alpha} shows that the proposed solution achieves better performance than the equal allocation method for different attenuation coefficients. For a fixed total bandwidth, adding the attenuation coefficients can dramatically increase the average transmission rate, and the performance gap becomes larger for setting larger values of total bandwidth. It is seen from Fig. \ref{fig3b} that increasing the attenuation coefficient boosts the number of subchannels, and the number of subchannels becomes stable when $\alpha$ is large enough (e.g., $\alpha=120$rad/m in this figure) for a fixed total bandwidth. This is because that larger $\alpha$ results in many frequency bands with similar radiated energy, as mentioned before.

\begin{figure}[htbp]
     \centering
    \subfigure[Average transmission rate]{
         \centering
         \includegraphics[width=2.9 in,height=2.4 in]{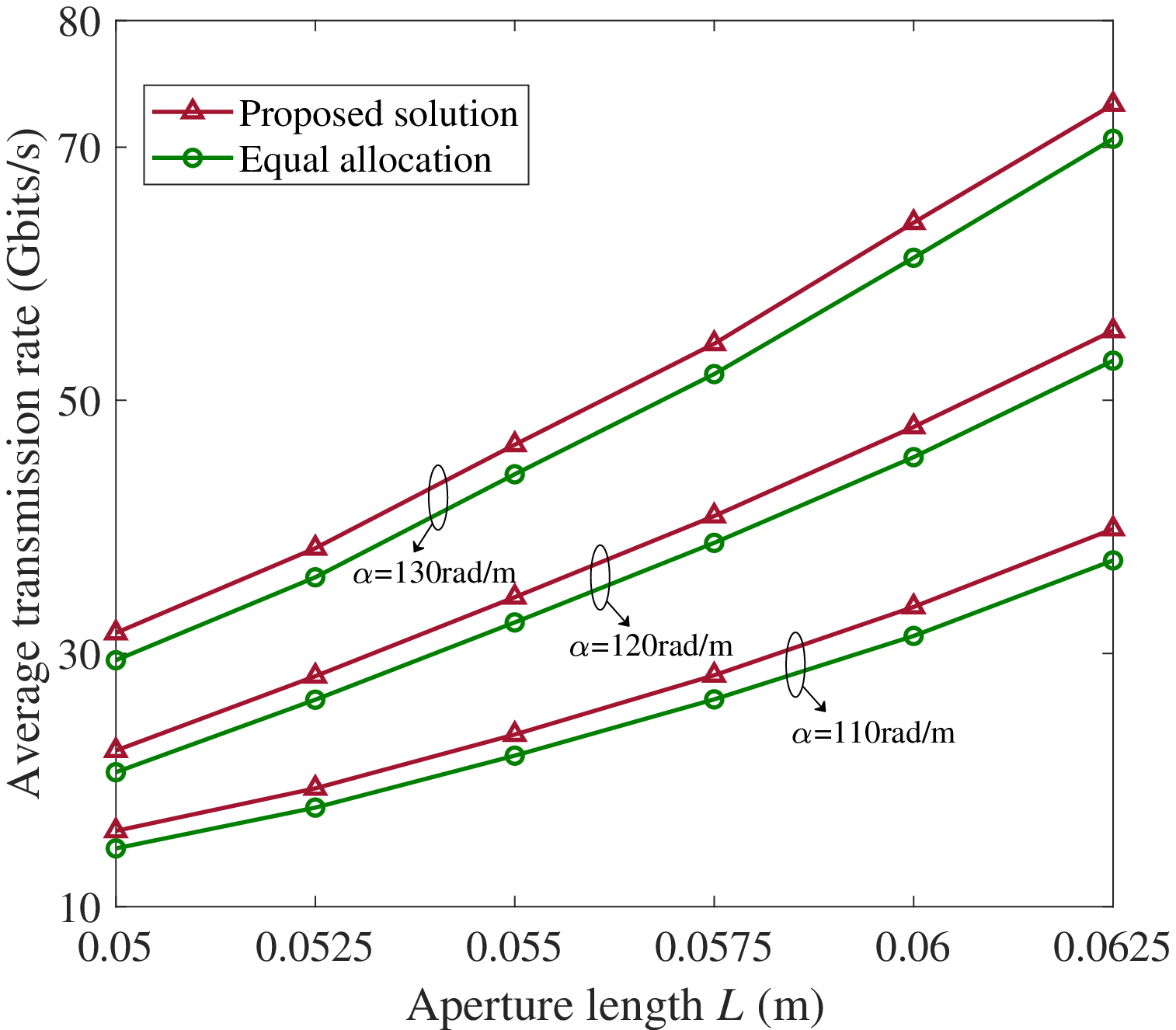}
      \label{fig4a}
     }
     \subfigure[Average number of subchannels]{
         \centering
         \includegraphics[width=2.9 in,height=2.4 in]{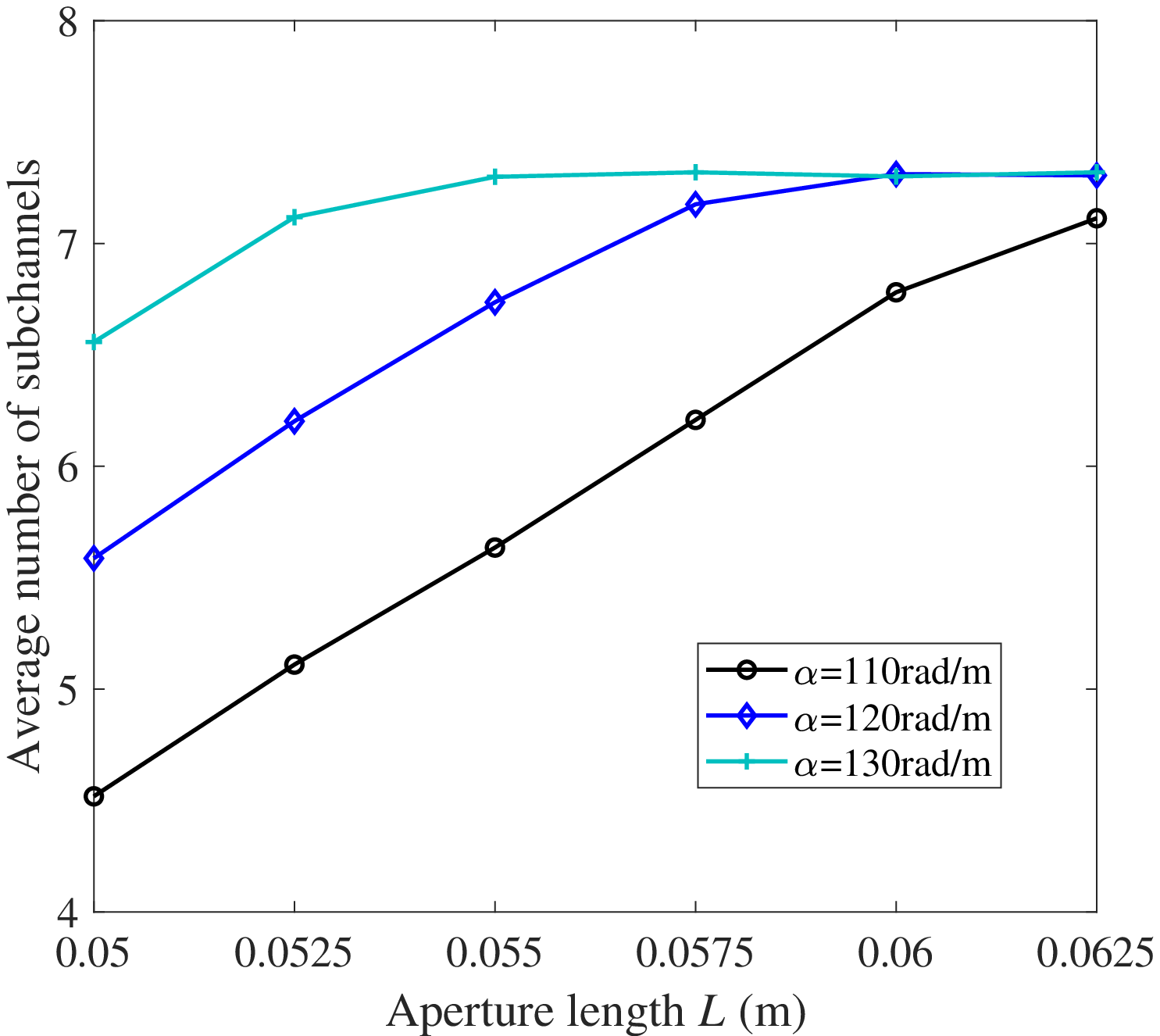}
      \label{fig4b}
    }
\caption{The average transmission rate and number of subchannels versus aperture length for different $\alpha$ with the coverage radius $r_{\max}=100$m and the total bandwidth $B_{total}=15$GHz.}
\label{fig_L}
\end{figure}

Fig.~\ref{fig_L} shows that the proposed solution achieves better performance than the equal allocation method for different aperture lengths. We see that slightly changing the aperture length  can have a big impact on the average transmission rate. Such an impact is more significant when increasing the $\alpha$ value. It is indicated from Fig.~\ref{fig4b} that slightly changing the aperture length also has a big impact on the number of subchannels.

%\begin{figure}[htbp]
%\centering
%\includegraphics[width=2.5 in,height=2.1 in]{EE_bandwidth.eps}
%\caption{The EE versus total bandwidth for different $\alpha$ with the aperture length $L=0.06$m.}
%\label{fig_EE_bandwidth}
%\end{figure}
%
%\begin{figure}[htbp]
%\centering
%\includegraphics[width=2.5 in,height=2.1 in]{EE_alpha.eps}
%\caption{The EE versus attenuation coefficient for different $L$ with the total bandwidth $B_{total}=15$GHz.}
%\label{fig_EE_alpha}
%\end{figure}
\subsection{Energy Efficiency}
{In this subsection, numerical results are presented  by using \textbf{Theorem}~\ref{theorem_3}, and the efficiency of the proposed power allocation solution is confirmed in comparison with the equal power allocation (namely $q_n=q_{\max}$, $n=1,\cdots,N$) under the same subchannel allocation obtained by using \textbf{Algorithm~\ref{algorithmic1}}}. In the simulations, $q_{\max}=-71.76$dBm/Hz, $q_{c}=-81.76$dBm/Hz,  the communication distance is uniformly distributed within the coverage radius $r_{\max}=100$m and other basic simulation parameters are the same as those mentioned in subsection~\ref{SA_optimal_sim}.
\begin{figure}[htbp]
     \centering
    \subfigure[The EE versus total bandwidth for different $\alpha$ with the aperture length $L=0.06$m.]{
         \centering
         \includegraphics[width=2.9 in,height=2.4 in]{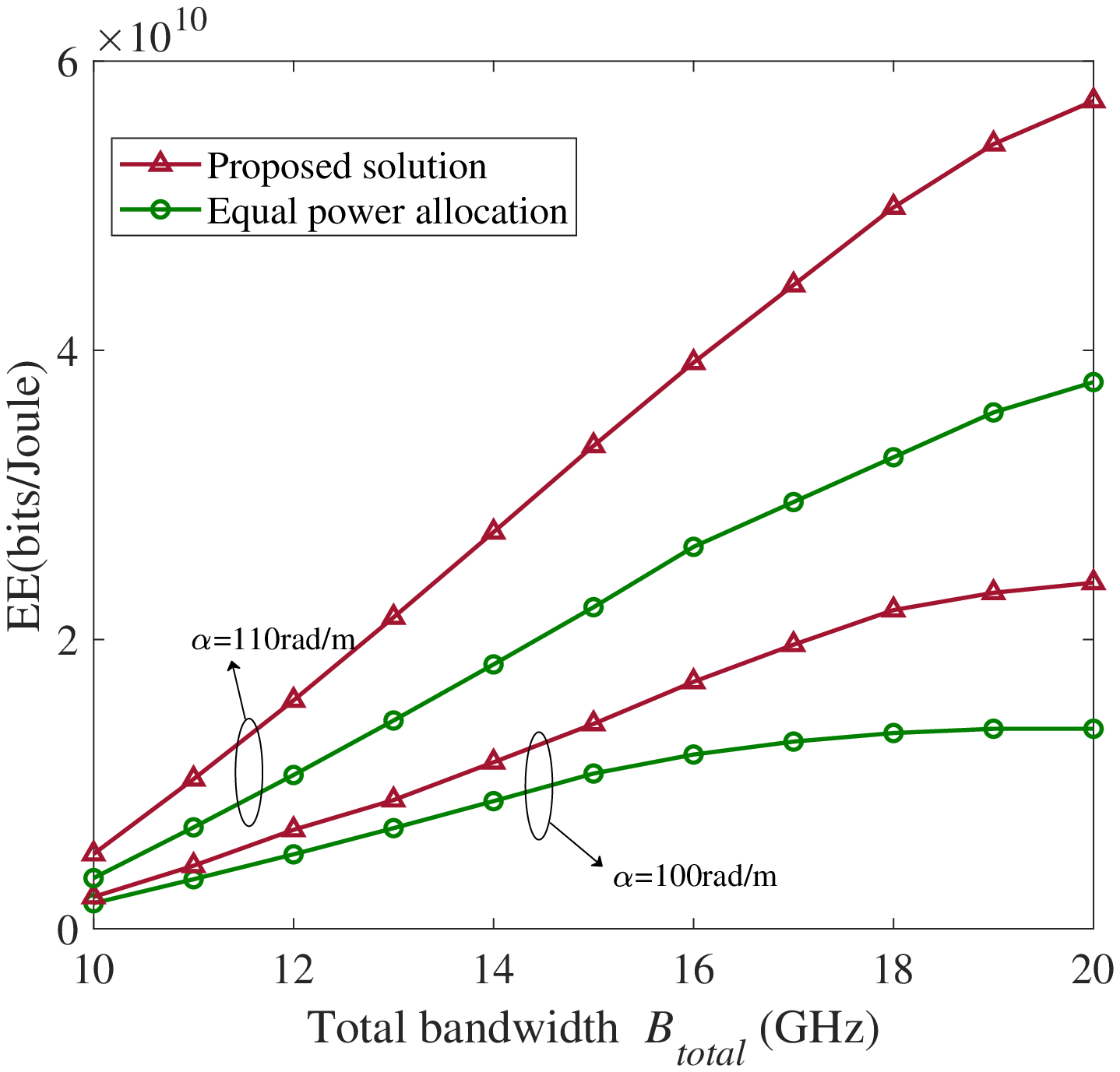}
      \label{fig_EE_bandwidth}
     }
     \subfigure[The EE versus attenuation coefficient for different $L$ with the total bandwidth $B_{total}=15$GHz.]{
         \centering
         \includegraphics[width=2.9 in,height=2.4 in]{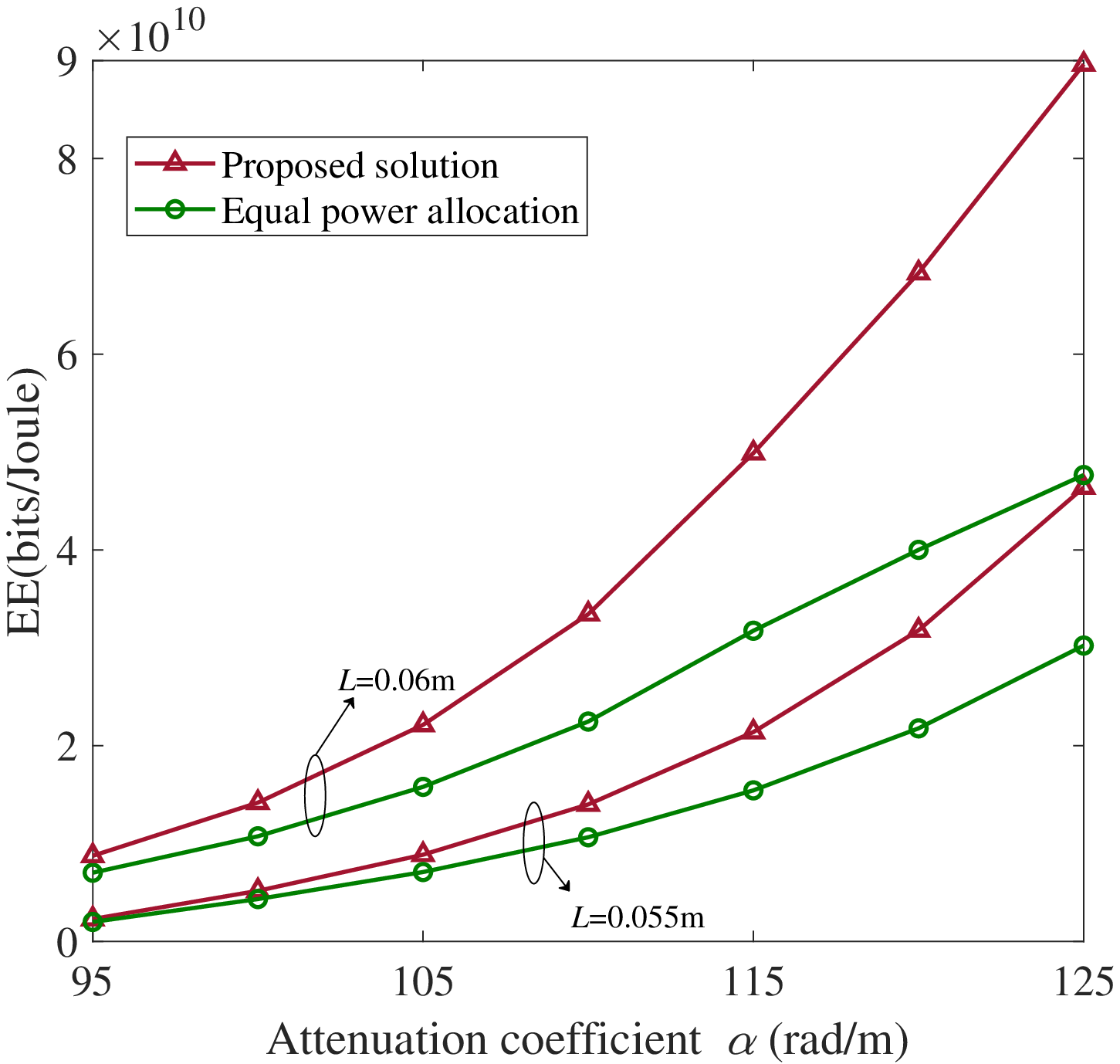}
      \label{fig_EE_alpha}
    }
\caption{Energy efficiency (EE) enhancement.}
\label{fig_ee}
\end{figure}

Fig.~\ref{fig_EE_bandwidth} shows that the proposed solution achieves better EE than the equal power allocation method for different total bandwidths, and the advantage of the proposed solution is more significant when increasing the total bandwidths or $\alpha$ value. The reason is that the increase of the total bandwidths or $\alpha$ value enables more available subchannels (See Fig.~\ref{fig_band}). The proposed solution can reduce more power consumption.
Fig.~\ref{fig_EE_alpha} shows that the proposed solution performs better than equal power allocation method for different attenuation coefficients. As mentioned before, the proposed solution saves more energy when increasing the $\alpha$ value. We see that slightly changing the aperture length has a big impact on the EE.

\section{Conclusions and Future Work}\label{conclusion_section}
This paper concentrated on the benefits of using the leaky-wave antenna in the THz networks, where each transmitter leveraged a single leaky-wave antenna to create high antenna gain. We first derived the average transmission rate in the dense THz networks and demonstrated the noise-limited behavior of using the leaky-wave antenna. Our results have shown that the effect of interference becomes significant only when the subchannel bandwidth is large enough and the transmitters are extremely dense. Then,  we addressed the subchannel allocation issue for large THz frequency bands, which is crucial for meeting the QoS constraint and tackling the PAPR issue. In light of the leaky-wave antenna's characteristics, a closed-form solution for subchannel allocation was developed. The results confirmed the efficiency of the proposed subchannel allocation solution compared to the equal allocation method. These results also indicated that the attenuation coefficient of the leaky-wave antenna has a substantial effect on the subchannel allocation. Furthermore, we developed a low-complexity power allocation method for EE enhancement, and the results showed that the proposed method achieves better EE performance than the equal power allocation.

While this work has shown the opportunities of using leaky-wave antennas in large-scale THz networks, more research efforts are required to further evaluate the performance behaviors from different perspectives and develop various transmission designs with leaky-wave antennas. In particular, the spatial-spectral feature of the leaky-wave antenna means that the value of the attenuation coefficient has to be properly designed. The reason is that larger attenuation coefficient values allow more frequencies to be in the main-lobe, which enables higher transmission rate. However, it also brings in more interference for the ultra-dense THz networks. Another area that warrants further research is the transmission design in the scenarios where each transmitter has multiple leaky-wave antennas. In this case, multiplexing gains are more likely to be achieved in the frequency-domain. But it may be hard to achieve large array gains due to the leaky-wave antenna's spatial-spectral feature, which differs from the beamforming/precoding designs with conventional large arrays in the sub-6 GHz and mmWave frequencies. In addition, further studies are needed for the scenarios such as multi-user transmissions, multi-tier transmissions with sub-6 GHz, mmWave and THz tiers,  cognitive radio, wiretap channels, integrated access and backhaul etc.

\section*{Appendix A: Proof of Theorem 1}
\label{App:theo_1}
\renewcommand{\theequation}{A.\arabic{equation}}
\setcounter{equation}{0}

Since $\ell\left( r_o \right)=(\frac{{\text{c}}}{{4\pi {f_1}}})^2\left(\max\left(D, r_o\right)\right)^{-\eta}$, problem \eqref{SA_problem_re} is equivalently transformed as
\begin{align}\label{A_1}
&\mathop {\max }\limits_{f_1}  \hat{F}\left(f_1\right)   \\
&\mathrm{s.t.} ~\mathrm{\widetilde{C}3}, \nonumber
\end{align}
where
\begin{align} \label{A_2}
\hat{F}\left(f_1\right)= f_1^{-2}\left(L-\frac{ L^3 \pi^2 f_1^2}{6 c^2}\left(1-\frac{f_{\rm co}^2}{2f_1^2}- \cos\theta\right)^2\right).
\end{align}
Taking the first-order and second-order derivatives of $\hat{F}\left(f_1\right)$ with respect to $f_1$  yields
\begin{align} \label{A_3}
\frac{{\partial \hat F}}{{\partial f_1 }} =  - {\rm{2}}f_1^{ - {\rm{3}}} \left( {L + \frac{{L^3 \pi ^2 f_{{\rm{co}}}^2 }}{{6c^2 }}\left( {1 - \frac{{f_{{\rm{co}}}^2 }}{{2f_1^2 }} - \cos \theta} \right)} \right),
\end{align}
and
\begin{align} \label{A_4}
{\frac{{\partial ^2 \hat F}}{{\partial f_1 ^2 }}}={\rm{6}}f_1^{ - 4} \left( {L + \frac{{L^3 \pi ^2 f_{{\rm{co}}}^2 }}{{6c^2 }}\left( {1 - \cos \theta} \right)} \right) - {\rm{5}}f_1^{ - 6} \frac{{L^3 \pi ^2 f_{{\rm{co}}}^4 }}{{6c^2 }},
\end{align}
respectively. Then, the solutions of $\frac{{\partial \hat F}}{{\partial f_1 }} = 0$ and ${\frac{{\partial ^2 \hat F}}{{\partial f_1 ^2 }}}= 0$ are given by
\begin{align} \label{A_5}
f_1^{(1)}=
\frac{{f_{{\rm{co}}}^2 }}{{\sqrt {\frac{{12c^2 }}{{L^2 \pi ^2 }} + 2\left( {1 - \cos \theta } \right)f_{{\rm{co}}}^2 } }},
\end{align}
and
\begin{align} \label{A_6}
f_1^{(2)}=\frac{{\sqrt {\rm{5}} f_{{\rm{co}}}^2 }}{{\sqrt {\frac{{{\rm{36}}c^2 }}{{L^2 \pi ^2 }} + 6 \left( {1 - \cos \theta} \right)f_{{\rm{co}}}^2} }},
\end{align}
respectively. According to \eqref{A_4} and \eqref{A_6}, we see that ${\frac{{\partial ^2 \hat F}}{{\partial f_1 ^2 }}} <0$ as $f_1 \in (0,f_1^{(2)})$, and ${\frac{{\partial ^2 \hat F}}{{\partial f_1 ^2 }}} >0$ as $f_1 \in (f_1^{(2)}, \infty)$. Since $f_1^{(1)} < f_1^{(2)}$, we have $\frac{{\partial \hat F}}{{\partial f_1 }} > \frac{{\partial \hat F}}{{\partial f_1 }}|_{f_1=f_1^{(1)}} =0 $ as $f_1 \in (0,f_1^{(1)})$, and $\frac{{\partial \hat F}}{{\partial f_1 }} < 0 $ as $f_1 \in (f_1^{(1)},\infty)$. Therefore, $f_1^{(1)}$ is the optimal solution for maximizing the objective function of problem \eqref{A_1}. Considering \eqref{A_5} and the constraint $\mathrm{C3}$,  we obtain \textbf{Theorem 1}.

\section*{Appendix B: Proof of Theorem 2}
\label{App:theo_1}
\renewcommand{\theequation}{B.\arabic{equation}}
\setcounter{equation}{0}
As mentioned in Appendix A, frequencies are selected to maximize the $\hat{F}\left(f\right)$ given by \eqref{A_2}, and  $\hat{F}\left(f\right)$ is the increasing function of $f$ as $f \in (0,f_1^*)$, and the decreasing function of $f$ as $f \in (f_1^*, \infty)$. Let $f_{\min} = \min \left\{ {f_1^* , \cdots ,f_{n - 1}^* } \right\}$, if the center frequency $f_n$ of the $n$-th subchannel satisfies $f_n \in (0,f_1^*)$, it should meet the following condition
\begin{align} \label{B_1}
f_n  + \frac{B_n}{2} = f_{\min }  - \frac{B_{\min}}{2},
\end{align}
where $B_{\min}$ is the bandwidth of the subchannel with the center frequency $f_{\min }$. Based on \eqref{SA_problem_re}, we see that $B_n=2f_n \frac{{10^{\varepsilon /20}  - 1}}{{10^{\varepsilon /20}  + 1}}$, and $B_{\min}=2f_{\min } \frac{{10^{\varepsilon /20}  - 1}}{{10^{\varepsilon /20}  + 1}}$. Thus, \eqref{B_1} is rewritten as
\begin{align} \label{B_2}
f_n  = 10^{ - \varepsilon /20} f_{\min }.
\end{align}
Likewise, let $f_{\max} = \max \left\{ {f_1^* , \cdots ,f_{n - 1}^* } \right\}$, if the center frequency $f_n$ of the $n$-th subchannel satisfies $f_n \in (f_1^*, \infty)$, we have
\begin{align} \label{B_3}
f_n  = 10^{ \varepsilon /20} f_{\max}.
\end{align}
Based on \eqref{A_1}, \eqref{B_2} and \eqref{B_3}, we find that the optimal $f_n$  is $10^{ - \varepsilon /20} f_{\min }$ when $\hat F\left( 10^{ - \varepsilon /20} f_{\min} \right) > \hat F\left( 10^{ \varepsilon /20} f_{\max}\right)$, otherwise it is $10^{ \varepsilon /20} f_{\max }$. As such, we obtain \textbf{Theorem 2}.
\section*{Appendix C: Proof of Theorem 3}
\label{App:theo_1}
\renewcommand{\theequation}{C.\arabic{equation}}
\setcounter{equation}{0}
Let $F_{\rm EE}\left(q_n\right)= \frac{1}{\ln 2}\frac{\ln\left( {1 + q_n \Xi_n} \right)}{q_n+q_c}$, which is the objective function of problem \eqref{EE_subproblem}. Taking the first-order derivative of $F_{\rm EE}$ yields
\begin{align} \label{C_1}
\frac{\partial F_{\rm EE}}{\partial q_n}= \frac{1}{\ln 2}\frac{\widehat{F}_{\rm{EE}}\left(q_n\right) }{\left( {q_n  + q_c } \right)^2},
\end{align}
where
\begin{align} \label{C_2}
\widehat{F}_{\rm{EE}}\left(q_n\right) = \frac{q_n  + q_c }{1 + q_n \Xi _n }\Xi _n  - \ln \left(1 + q_n \Xi _n  \right).
\end{align}
Since $\frac{{\partial \widehat{F}_{{\rm{EE}}} }}{{\partial q_n }} =  - \frac{{\left( {q_c  + q_n } \right)\Xi _n^2 }}{{\left( {1 + q_n \Xi _n } \right)^2 }} < 0$, $\widehat{F}_{\rm{EE}}\left(q_n\right)$  is a decreasing function of $q_n$. Based on the constraints $\mathrm{\hat{C}1}$ and $\mathrm{\hat{C}2}$, we see that $\frac{\gamma _{\rm th}}{\Xi_n}<q_n<q_{\max}$. Then, two cases need to be considered as follows:
\begin{itemize}
  \item Case 1: When $\widehat{F}_{\rm{EE}}\left(q_{\max}\right)\geq 0$, $\widehat{F}_{\rm{EE}}\left(q_n\right) > 0$ and thus $\frac{\partial F_{\rm EE}}{\partial q_n} > 0$ for $q_n \in \left(\frac{\gamma _{\rm th}}{\Xi_n},q_{\max}\right]$, i.e., $F_{\rm EE}\left(q_n\right)$ is an increasing function of $q_n$, hence the optimal $q_n^*$ is $q_n^*=q_{\max}$.
  \item Case 2: When $\widehat{F}_{\rm{EE}}\left(q_{\max}\right)<0$, $q_o$ with $\widehat{F}_{\rm{EE}}\left(q_o\right)=0$ can be easily obtained by using a one-dimension search for $q_o \in \left(0,q_{\max}\right]$. Then, we see that $\frac{\partial F_{\rm EE}}{\partial q_n} \geq 0$ for $q_n \in \left(0,q_o\right]$ and $\frac{\partial F_{\rm EE}}{\partial q_n} < 0$ for $q_n \in \left(q_o,q_{\max}\right]$. Therefore, $q_n^*=\max\left\{q_o,\frac{\gamma _{\rm th}}{\Xi_n}\right\}$ is the optimal solution.
\end{itemize}
As such, we get the optimal $q_n^*$ given in \textbf{Theorem}~\ref{theorem_3}.

\bibliographystyle{IEEEtran}

\end{document}